\newcommand{\bbE}{\mathbb{E}}
\DeclareMathOperator{\diag}{\text{diag}}
\DeclareMathOperator{\im}{\text{im}}
\DeclareMathOperator{\var}{\text{\sf var}}
\newcommand{\bbR}{\mathbb{R}}
\newcommand{\bx}{\mathbf{x}}
\newcommand{\by}{\mathbf{y}}
\newcommand{\bw}{\mathbf{w}}
\newcommand{\bB}{\mathbf{B}}
\newcommand{\Xt}{\widetilde{X}_t}
\newcommand{\Xb}{\overline{X}_t}
\newcommand{\sigmat}{\tilde{\sigma}}
\newcommand{\scal}[2]{\langle{#1},{#2}\rangle}
\newcommand{\nor}[1]{\|{#1}\|}
\newcommand{\tr}{\top\!}
\newcommand{\bbone}{\mathds{1}}
\newtheorem{theorem}{Theorem}[section]
\newtheorem{corollary}{Corollary}[section]
\newtheorem{definition}{Definition}[section]
\newtheorem{lemma}{Lemma}[section]
\theoremstyle{definition}
\begin{document}
\vspace{10mm} \ \\ \ \\ \ \\

\begin{center}
{\bf \LARGE Synchronization and Redundancy:\\ Implications for Robustness of Neural Learning and Decision Making}
\end{center}

\ \\
{\bf \large Jake Bouvrie} ({\tt jvb@math.duke.edu}) \\
{Department of Mathematics, Duke University, Durham, NC USA} \\
\ \\
{\bf \large Jean-Jacques Slotine} ({\tt jjs@mit.edu})\\
{Nonlinear Systems Laboratory, Massachusetts Institute of Technology, Cambridge, MA USA}\\

\ \\ [-2mm]
{\bf Keywords:} stochastic systems, learning, synchronization, neural decision making, population coding, collective enhancement, neural uncertainty, stochastic contraction.

\begin{abstract}
Learning and decision making in the brain are key processes critical to survival, and yet
are processes implemented by non-ideal biological building blocks which can impose significant
error. We explore quantitatively how the brain might cope with this inherent source of error by taking advantage of two ubiquitous mechanisms, redundancy and synchronization. In particular we consider a neural process whose goal is to learn a decision function by implementing a nonlinear gradient dynamics. The dynamics, however, are assumed to be corrupted by perturbations modeling the error which might be incurred due to limitations of the biology, intrinsic neuronal noise, and imperfect measurements.
We show that error, and the associated uncertainty surrounding a learned solution, can be controlled in large part by trading off synchronization strength among multiple redundant neural systems against the noise amplitude. The impact of the coupling between such redundant systems is quantified by the spectrum of the network Laplacian, and we discuss the role of network topology in synchronization and in reducing the effect of noise. A range of situations in which the mechanisms we model arise in brain science are discussed, and we draw attention to experimental evidence suggesting that cortical circuits capable of implementing the computations of interest here can be found on several scales. Finally, simulations comparing theoretical bounds to the relevant empirical quantities show that the theoretical estimates we derive can be tight.
\end{abstract}

\section{Introduction}
Learning and decision making in the brain are key processes critical to survival, and yet
are processes implemented by imperfect biological building blocks which can impose significant
error. We suggest that the brain can cope with this inherent source of error by taking advantage of two ubiquitous mechanisms: {\em redundancy}, and {\em sharing of information}. These concepts will be made precise in the context of a specific model and learning scenario which together can serve
as a conceptual tool for illustrating the effect of redundancy and sharing.

Motivated by the problem of learning to discriminate, we consider a neural process whose goal is to learn a decision function by implementing a nonlinear gradient dynamics. The dynamics, however, are assumed to be corrupted by perturbations modeling the error which might be incurred. This general perspective is
intended to capture a range of possible learning instances occurring at different anatomical scales: The neural process can involve whole brain areas communicating via behavioral, motor or sensory pathways~\citep{Schnitzler:NatRev:05}, as in the case of the multiple amygdala-thalamus loops assumed to underpin fear conditioning for instance~\citep{LeDoux:AnnRevNeuro:00,Maren:AnnRevNeuro:01}. Interacting local field potentials (LFPs) may also be modeled as both direct (by long range phase-locking, e.g. in  olfactory systems~\citep{Friedrich:NatNeuro:04}) or indirect measurements of coordination and interaction among large assemblies of neurons. The learning dynamics may alternatively model smaller ensembles of individual neurons, as in primary motor cortex, though we do not emphasize biological realism in our models at this scale. Nevertheless, one may still draw useful conclusions as to the role of redundancy and information sharing. The error too may be treated at different scales, and may take the form of noise intrinsic to the neural environment~\citep{Faisal:NRN:2008} on a large, aggregate scale (e.g. in the case of LFPs) or on a small scale involving localized populations of neurons.

If there is noise corrupting the learning process, an immediate question is whether it is possible to gauge the accuracy of the predictions of the learned function, and to what extent the organism can reduce uncertainty in its decisions by taking advantage of a simple, common information sharing mechanism.
If there is redundancy in the form of multiple independent {\em copies} of the dynamical circuit~\citep{Adams:98,Fernando:2010}, it is reasonable to expect that averaging over the different solutions might reduce noise via cancelation effects. In the case of learning in the brain, however, this approach is problematic because neurons are susceptible to saturation of their firing rates, and on large scales aggregate signal amplitudes will also saturate; the macroscopic dynamics that neuron populations and assemblies obey can be strongly nonlinear. When the dynamics followed by different dynamical systems are nonlinear, one cannot expect to gain a meaningful signal by linear averaging (see e.g.~\citep{Tabareau10} and examples therein). As a simple illustration of this phenomenon, consider a collection of noisy sinusoidal oscillators allowed to run starting from different initial conditions, with identical frequencies and independent noise terms.  The oscillators will be out of phase from each other, so an average over the trajectories will not yield anything close to a clean version of a sinusoid at the desired frequency. On the other hand it is reasonable to suppose that {\em synchronization} across neuron populations or between macro-scale cortical loops may provide sufficient phase alignment to make linear averaging, and thus ``consensus'', a powerful proposal for reducing the effects of noise~\citep{Tabareau10,Masuda10,Cao10,Young:ACC:10,Poul:ACC:10,Gigante:PLOS:09}. Indeed, it is a well known fact, that synchrony within a system of coupled dynamical elements provides (quantifiable) robustness to perturbations occurring in any one element's dynamics~\citep{Needleman:PD:2001,WangSlotine05,Pham:NN:07}.

We will place much emphasis on exploring quantitatively the role of synchrony in controlling uncertainty arising from noise modeling neural error. In particular, we base our work on the argument that noisy, nonlinear trajectories can be linearly averaged if fluctuations due to noise can be made small and that fluctuations can be made small by coupling the dynamical elements appropriately.  In the stochastic setting adopted here, ``synchronization'' refers to state synchrony: the tendency for individual elements' trajectories to move towards a common trajectory, in a quantifiable sense. The estimates we present directly characterize the tradeoff between the network's tendency towards synchrony and the noise, and ultimately address the specific role this tradeoff plays in determining uncertainty surrounding a function learned by an imperfect learning system.

We further show how and where the topology of the network of neural ensembles impacts the extent to which the noise, and therefore uncertainty, can be controlled. The estimates we provide take into account in a fundamental way both the nonlinearity in the dynamics and the noise. More generally, the work discussed here also has implications in other related domains, such as networks of coupled learners or adaptive sensor networks, and can be extended to multitask online or dynamic learning settings. The difficulty inherent in analyzing {\em dynamic} learning systems, such as hierarchies with feedback~\citep{Mumford:BC:92,LeeMumford03}, poses a challenge. But considering dynamic systems can yield substantial benefits: transients can be important, as suggested by the literature on regularization paths and early stopping~\citep{Yao07}. Furthermore, the role of feedback/backprojections and attention-like mechanisms in learning and recognition systems, both biological and artificial, is known to be important but is not well understood~\citep{Hahnloser:NN:99,Itti:NRN:01,Hung:Science:05}.

The paper is organized as follows. In Section~\ref{sec:setup} we consider a specific learning problem and define a system of stochastic differential equations (SDEs) modeling a simple dynamic learning process. We then discuss stability and network topology in the context of synchronization. In Section~\ref{sec:theorems} we present the main theoretical results of the paper, a set of uncertainty estimates, postponing proofs until later. Then in Section~\ref{sec:expts} we provide simulations and compare empirical estimates to the theoretical quantities predicted by the Theorems in Section~\ref{sec:theorems}. Section~\ref{sec:discussion} provides a discussion addressing the significance and applicability of our theoretical contributions to neuroscience and behavior. Finally, in Section~\ref{sec:proofs} we give proofs of the results stated in Section~\ref{sec:theorems}.

\section{Biological Learning as a Stochastic Network Model}\label{sec:setup}
The learning process we will model is that of a one-dimensional linear fitting problem described by gradient based minimization of a square loss objective, in the spirit of Rao \& Ballard~\citep{Rao:NNS:99}. This is perhaps the simplest and most fundamental abstract learning problem that an organism might be confronted with -- that of using experiential evidence to infer correlations and ultimately discover causal relationships which govern the environment and which can be used to make predictions about the future. The model realizing this learning process is also simple, in that we capture neural communication as an abstract process ``in which a neural element (a single neuron or a population of neurons) conveys certain aspects of its functional state to another neural element''~\citep{Schnitzler:NatRev:05}. In doing so, we focus on the underlying computations taking place in the nervous system, rather than dwell on neural representations. Even this simple setting becomes involved technically, and is rich enough to explore all of the key themes discussed above. Our model also supports nonlinear decision functions in the sense that we might consider taking a linear function of nonlinear variables whose values might be computed upstream. In this case the development would be similar, but extended to the multidimensional case. The model may also be extended to richer function classes and more exotic loss functions directly, however for our purposes the additional generality does not yield significant further insight and furthermore might raise biological plausibility concerns\footnote{In the sense that one would have to carefully justify biologically the particular nonlinearities going into a nonlinear decision function on a case-by-case basis.}.

\subsection{Problem Setup}
To make the setting more concrete, we begin by assuming that we have observed a set of input-output examples $\{x_i\in\bbR,y_i\in\bbR \}_{i=1}^m$, each
representing a generic unit of sensory experience, and want to estimate the linear regression function
$f_w(x) = wx$. Adopting the square loss, the total error incurred on the observations by $f_w$ is given by the familiar expression
\begin{equation*}\label{eqn:obective_fn}
E(w) = \sum_{i=1}^m(y_i - f_{w}(x_i))^2 = \sum_{i=1}^m(y_i - wx_i)^2.
\end{equation*}
We will model adaptation (training) by a noisy gradient descent process, with biologically plausible dynamics, on this squared prediction error loss function. The trajectory of the slope parameter over time $w(t)$ and its governing dynamics may be represented in the biology in various forms. Stochastic rate codes, average activities in populations of neurons and population codes, localized direct electrical signals and chemical concentration gradients are some possibilities occurring across a range of scales. The dynamical system may also be interpreted as modeling the noisy, time-varying strength of a local field potential or other macro electrophysiological signal when there are multiple, interacting brain regions. We discuss these possibilities further in Section~\ref{sec:discussion}.

The gradient of $E$ with respect to the weight parameter is given by $\nabla_{w} E =  -\sum_{i=1}^m(y_i - wx_i)x_i$, and serves as the starting point. The gradient dynamics $\dot{w}=-\nabla_{w} E(w)$ are both linear and noise-free. Following the discussion above, we modify these dynamics to capture nonlinear {\em saturation effects} as well as (often substantial) {\em noise} modeling error.
Saturation effects lead to a saturated gradient which we model in the form of the hyperbolic tangent nonlinearity,
$$
\dot{w} = -\tanh(a\nabla_{w} E(w)),
$$
where $a$ is a slope parameter. Note that the saturated dynamics need not be interpretable as itself
the gradient of an appropriate loss function. The fundamental learning problem is defined by the
square-loss, but it is implemented using an imperfect mechanism which imposes the nonlinearity\footnote{ Put differently, in our setting the nonlinearity is not part of the learning {\em problem}, and so the saturated gradient dynamics should not be viewed as the gradient of another error criteria}. The error is modeled with an additional diffusion (noise) term giving the SDE
\begin{equation}\label{eqn:noisy_single}
dw_t = -\tanh(a\nabla_{w} E(w_t))dt + \sigma dB_t ,
\end{equation}
where $dB_t$ denotes the standard 1-dimensional Wiener increment process with standard deviation  $\sigma > 0$.
As mentioned before, this noise term $\sigma dB_t$ and corresponding error is due to intrinsic neuronal noise~\citep{Faisal:NRN:2008} (aggregated or localized) and possible interference between large assemblies of neurons or circuits and parallels the more general concept of measurement error in networks of coupled dynamical systems.

\subsection{Synchronization and Noise}
We now consider the effect of having $n$ independent copies of the neural system or pathway implementing the dynamics~\eqref{eqn:noisy_single}, with associated parameters $\{w_1(t),\ldots,w_n(t)\}$. Since these dynamics are nonlinear, the effect of the noise cannot be reduced by simply averaging over the independent trajectories. However, if the circuit copies are coupled strongly enough they will attempt to {\em synchronize}, and averaging over the copies becomes a potentially powerful way to reduce the effect of the noise~\citep{Sherman:BJ:1991,Needleman:PD:2001}. The noise can be potentially large (we do not make any small-noise assumptions), and will of course act to break the synchrony. We will explore how well the noise can be reduced by synchronization and redundancy in the sections that follow.

Given $n$ {\em diffusively coupled} copies of the noisy neural system, and setting $a=1$ in~\eqref{eqn:noisy_single}, we have the following
system of nonlinear SDEs:
\begin{equation}\label{eqn:component_sys}
dw_i(t) = -\tanh\Biggl[\sum_{\ell=1}^m\bigl(w_i(t)x_{\ell} - y_{\ell}\bigr)x_{\ell}\Biggr]dt +
\sum_{j=1}^n W_{ij}(w_j-w_i)dt + \sigma dB_t^{(i)}
\end{equation}
for $i=1,\ldots,n$, where $B_t^{(i)}$ are {\em independent} standard Wiener processes.
The diffusive couplings here should be interpreted as modeling abstract intercommunication between and among different neural circuits, populations, or pathways. In such a general setting, diffusive coupling is a natural and mathematically tractable choice that can capture the key, aggregate aspects of communication among neural systems. Electrical connections such as those implemented by gap junctions in the mammalian cortex~\citep{Fukuda:06,Bennett:04} are also modeled well by diffusive coupling terms when individual neurons are being discussed, however we emphasize that the
system in Equation~\eqref{eqn:component_sys} is a conceptual model involving possibly large brain regions and do not make assumptions at a level of biological detail that would invoke or require gap-junction type connectivity.

Each copy of the basic neural circuit is corrupted by independent noise processes but follows the same noise-free dynamics as the others, modulo initial conditions. In fact these coupled systems may start from very different initial conditions. We will assume for simplicity uniform symmetric weights $W_{ji} = W_{ij} = \kappa > 0$ when element $i$ is connected to element $j$.
Defining $(\bw)_i$ to be the (scalar) output of the $i$-th circuit, we can rewrite the system~\eqref{eqn:component_sys} in vector form as
\begin{equation}\label{eqn:orig_sys}
d\bw(t) = -\Bigl(\tanh\Bigl[\sum_{i=1}^m\bigl(\bw(t)x_i - y_i\bbone\bigr)x_i\Bigr] +
L\bw(t)\Bigr)dt + \sigma d\bB_t
\end{equation}
where $L=\diag(W\bbone) - W$ is the {\em network Laplacian}, and $\mathbf{B}_t$ is the standard $n$-dimensional Wiener process. The spectrum of the network
Laplacian captures important properties of the network's topology, and will play a key role.
Finally, the change of variable $X_t:=\bw\nor{\bx}^2 - \scal{\bx}{\by}\bbone$, with $(\bx)_i = x_i, (\by)_i=y_i$, yields a system that will be easier to analyze:
%\begin{equation}\label{eqn:simple_sys}
%dX_t = -\underbrace{\bigl(\tanh(X_t)\nor{\bx}^2
%+ LX_t\bigr)}_{g(X_t)}dt + \sigmat d\bB_t
%\end{equation}
\begin{equation}\label{eqn:simple_sys}
dX_t = -\bigl(\tanh(X_t)\nor{\bx}^2
+ LX_t\bigr)dt + \sigmat d\bB_t
\end{equation}
where we have defined $\sigmat:=\sigma\nor{\bx}^2$.
The unique globally stable equilibrium point for the deterministic part of~\eqref{eqn:simple_sys} is seen to be $X^{*}=0$, which checks with the fact that the solution to the linear regression problem is $w^*=\scal{\bx}{\by}/\scal{\bx}{\bx}$ in this simple case.

\subsection{Role of Network Topology}\label{sec:graphs_ex}
The topology of a network of dynamical systems strongly influences synchronization, to include
the rate at which elements synchronize and whether sync (or the tendency to sync) can occur at all in the first place. Thus the pattern of interconnections among neural systems plays an important role in controlling uncertainty by way of synchronization properties. In a network of stochastic systems of the general (diffusive) type described in Section~\ref{sec:setup}, topology can be seen to influence the robustness of synchrony to noise through the spectrum of the network Laplacian. Laplacians arising in various interesting networks and applications have received much attention, both in biological decision making and in the context of synchronization of dynamical systems more generally~\citep{Kopell:86,Kopell:00,Jadbabaie:03,WangSlotine05,Taylor:09,Poul:ACC:10}.

We will consider four important network graphs here, and these arrangements will be helpful examples to keep in mind when interpreting the results given in Section~\ref{sec:theorems}.
The simplest graph of coupled elements is perhaps the full, all-to-all graph. As one may guess, this network is also the easiest to synchronize since each element can speak directly to the others. The spectrum of the network Laplacian $\lambda(L)$ for this graph shows why it might be especially effective for reducing uncertainty in the context of Equation~\eqref{eqn:orig_sys}. With uniform coupling strength $\kappa>0$ and $n$ denoting the number of elements in the network, one can check that $\lambda(L) = \{0,n\kappa,\ldots,n\kappa\}$. Denote by $\lambda_{-}$ the smallest non-zero (Fiedler) eigenvalue, and by $\lambda_{+}$ the largest eigenvalue. Here $\lambda_{-}=\lambda_{+}=n\kappa$ and it is these eigenvalues that control synchronization for any given network. As we will show below in Theorem~\ref{thm:tilde_bounds}, the effect of the noise can be reduced particularly quickly precisely because the non-zero eigenvalues depend on both parameters, $\kappa$ and $n$.

\begin{figure}[t]
\centering
\includegraphics[height=2.5cm]{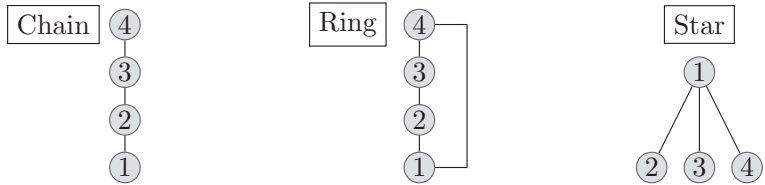}
\caption{{\em Examples of (undirected) network graphs.}}
\label{fig:ex_graphs}
\end{figure}

If fewer connections are made in the network it becomes harder to synchronize, and we move away from
the all-to-all ideal.  Figure~\ref{fig:ex_graphs} shows some other common network graphs. The undirected
ring graph, appearing in the middle, has spectrum $\lambda_i(L) =    2\kappa\bigl[1-\cos\bigl(\tfrac{2\pi}{n}(i-1)\bigr)\bigr], i=1,\ldots,n$. If the single edge connecting
the first and last elements is removed to make a chain as shown on the left in the Figure, the network
becomes considerably harder to synchronize~\citep{Kopell:86}, although the spectrum of the chain looks similar: $\lambda_i(L) =    2\kappa\bigl[1-\cos\bigl(\tfrac{\pi}{n}(i-1)\bigr)\bigr]$. This makes intuitive sense because information is constrained to flow through only one path, and with possibly significant delays. Finally, the star graph shown on the right in the Figure has spectrum $\lambda(L) = \{0,\kappa,\ldots,\kappa,n\kappa\}$, and we can see that the key Fiedler eigenvalue $\lambda_{-}=\kappa$ does not grow with the size of the network $n$. The Theorems in Section~\ref{sec:theorems} then predict that it will be impossible to increase the synchronization rate simply by incorporating more copies of the neural circuit. The coupling strength must also increase to make fluctuations from the common trajectory (synchronization subspace) small. We will discuss this case in more detail. As might be particularly relevant to brain anatomy, random graphs and directed graphs may also be considered, and have been
studied extensively~\citep{BollobasRandomGraphs}.

In neuroscience-related models, each connection in a network has an associated biophysical cost in terms of energy and space requirements. All-to-all networks, with $n^2$ connections among $n$ circuits or neurons, is often criticized as being biologically unrealistic because of this cost. However, it has been noted that all-to-all connectivity can be implemented with $2n$ connections using quorum sensing ideas~\citep{Taylor:09}, wherein a global average is computed and shared. The global average is computed given inputs from all $n$ elements, and this average is sent back to each circuit via another $n$ connections. The shared variable may be communicated by synapses, or sensed chemically or electrically. Although quorum sensing cannot realize {\em any} set of $n^2$ connections, the global average may be a weighted average or there may be several common variables organized hierarchically. This allows for a rich set of networks with $\mathcal{O}(n)$ connectivity which behave more like networks with all-to-all connectivity for synchronization and stability purposes. Furthermore, dynamics in the computation of the quorum variable itself, when appropriate for modeling purposes, does not necessarily pose any special difficulty for establishing synchronization properties if virtual systems are used~\citep{Russo:10}.

The difficulty with which synchrony may be imposed can be ``normalized'' by the number of connections in many cases to obtain a comparison between synchronization properties of various graphs that takes biological cost into account. Using quorum variables where appropriate, graphs whose spectrums depend on $n$ are thus roughly comparable on equal biological terms. Cost-normalized comparisons of synchronization properties are not always possible or meaningful, however. Consider the ring and chain networks introduced above. There is a difference of one edge between the two, but in the noise-free setting for example the chain requires asymptotically four times more effort to synchronize than the ring architecture (see e.g.~\citep{WangSlotine05}, Example 4.5).

\subsection{A Comment on Stability and Contraction}\label{sec:stability}
We turn to analyzing the stability of the nonlinear system given by Equation~\eqref{eqn:simple_sys}.
We will argue that this is difficult for two reasons: the presence of noise, and the fact that the (noise-free) dynamics saturate in magnitude. Indeed, without additional assumptions, one cannot in general show that the system is globally exponentially stable.
A common method for studying the stability properties of a noiseless nonlinear dynamical system
is via Lyapunov theory~\citep{SlotineBook}, however in the presence of noise system trajectories along the Lyapunov surface may not be strictly decreasing. Contraction analysis~\citep{Lohmiller98,WangSlotine05} is a differential formalism related to Lyapunov exponents, and captures the notion that a system is stable in some region if initial conditions or temporary disturbances are forgotten. If all neighboring trajectories converge to each other, global exponential convergence to a single trajectory can be concluded:
\begin{definition}[Contraction]
Given the system equations $\dot{\bx} = \mathbf{f}(\bx, t)$, a region of the state
space is called a contraction region if the Jacobian $J_f = \tfrac{\partial f}{\partial x}$ is
uniformly negative definite in that region. Furthermore, the contraction {\em rate} is given by $\beta$, where
$\tfrac{1}{2}(J_f + J_f^{\tr}) \leq \beta\mathbf{I} < 0$.
\end{definition}
An analogous definition in the case stochastic dynamics has also been developed~\citep{Pham09},
and requires contraction of the noise-free dynamics as well as a uniform upper bound on the
variance of the noise. However for the system~\eqref{eqn:simple_sys}, the Jacobian is found to be
$$
J(\bw) = \nor{\bx}^2\diag(\tanh^2(\bw)-1) - L
$$
so that
$
\lambda\bigl(J(\bw)\bigr) < -\lambda\bigl(L\bigr) \leq -\lambda_{\text{min}}(L) = 0 .
$
The subspace of constant vectors is a flow invariant subspace, and $L$ does not
contribute to the dynamics in this flow invariant space since $L$ has a zero eigenvalue
corresponding to its constant eigenvector. This difficulty can arise whenever one considers
 diffusively coupled elements, and in such cases the usual way
around this difficulty is to work with an auxiliary or virtual system (as in e.g.~\citep{Pham:NN:07}) and study contraction to the flow invariant subspace starting from initial conditions outside. However since $\tanh^{\prime}(x)=1-\tanh^2(x)$, we still are left with the difficulty that the noise-free dynamics can have a convergence rate to equilibrium arbitrarily close to zero as one travels far out to the tails of the $\tanh$ function; the system is not necessarily contracting. Indeed, for any saturated dynamics, $\tanh\bigl(\mathbf{f}(\bx,t)\bigr)$, the
 rate can be arbitrarily small. Thus one cannot easily determine the rate of convergence to
equilibrium using standard techniques. The analysis which we provide in the succeeding sections
will attempt to get around these difficulties by separately exploring the system's behavior in and out of the flow-invariant (synchronization) subspace of constant vectors.

\section{Controlling Uncertainty in Learning}\label{sec:theorems}
In this section we present and interpret the main results of the paper. The argument we
put forward is that noisy, nonlinear trajectories can be linearly averaged to reduce the noise if fluctuations due to noise can be made small. We show that the fluctuations can be made small by coupling the dynamical systems, and that one can precisely control the size of the fluctuations. In particular, we give estimates which show that the tradeoff between noise and coupling strength among neural circuits  determines the amount of uncertainty surrounding the decisions made by the neural system. Proofs of the Theorems are postponed until Section~\ref{sec:proofs}.

\subsection{Preliminaries}
We begin by decomposing the stochastic process $\{X_t\in\bbR^n\}_{t\geq 0}$ into a sum describing fluctuations about the center of mass. Let $P = I - (1/n)\bbone\bbone^{\tr}$, the canonical projection onto the zero-mean subspace of $\bbR^n$, and define $Q = I - P$. Then for all $t\geq 0$, $X_t = PX_t + QX_t$.
%\begin{equation}\label{eqn:traj_decomp}
%X_t = PX_t + QX_t
%\end{equation}
Clearly, $\ker P =\im Q$ is the subspace of constant vectors. We will adopt the notation
$\Xt$  for $PX_t$, and $\Xb\bbone$ for $QX_t$ (along with the analogous notation $\widetilde{\bw}_t$ and $\bar{w}_t$), and derive expressions for these quantities based on~Equation~\eqref{eqn:simple_sys}. The macroscopic variable $\Xb$ satisfies
\begin{equation}\label{eqn:xbar}
d\Xb = \tfrac{1}{n}\bbone^{\tr}dX_t = -\frac{\nor{\bx}^2}{n}\bbone^{\tr}\tanh(X_t)dt +
\frac{\sigmat}{\sqrt{n}}dB_t
\end{equation}
and thus
\begin{equation}\label{eqn:flucts_process}
d\Xt = dX_t - d\Xb\bbone =
 -\Bigl(\tanh(X_t)\nor{\bx}^2 + LX_t - \frac{\nor{\bx}^2}{n}\bbone^{\tr}\tanh(X_t)\bbone \Bigr)dt +
\sigmat d\bB_t - \frac{\sigmat}{\sqrt{n}}dB_t\bbone .
\end{equation}
In terms of the original variable $\bw$, the fluctuations $\widetilde{\bw}_t$ are purely due to the noise, while $\bar{w}_t$ parameterizes the average decision function. As the decision function we consider is linear, the uncertainty in the decisions is directly equivalent to uncertainty in the parameter $w$.
We will study the evolution of both the mean and the fluctuation processes over time, however to assess uncertainty the central quantity of interest will be the size of the ball containing the
fluctuations (the ``microscopic'' variables). We characterize the magnitude of the fluctuations via the squared norm process satisfying
\begin{equation}\label{eqn:flucts}
\frac{d\|\Xt\|^2}{2} = -\Bigl(\nor{\bx}^2\langle \Xt,\tanh(X_t)\rangle + \langle \Xt,LX_t\rangle\Bigr)dt  +
\frac{1}{2}\sigmat^2(n-1)dt + \sigmat\|\Xt\|dB_t
\end{equation}
which follows from~\eqref{eqn:flucts_process} applying Ito's Lemma to the function $h(\Xt) = \frac{1}{2}\scal{\Xt}{\Xt}$ and the fact that $\scal{\Xt}{d\bB_t} = \nor{\Xt}dB_t$ in law.

\subsection{Uncertainty Estimates}
The first --and central-- result says that the ball centered at $\bar{w}$ (the center of mass) containing the fluctuations can be controlled in expectation from above and below by the coupling strength and in most cases the number of circuit copies, via the spectrum of the network Laplacian $L$. We note that lower bounds are typically ignored in the dynamical systems literature, possible because they are less important for stability analyses. We have found, however, that such bounds can be derived in the case of saturated gradient dynamics, and that control from below can yield further insight into the present problem of neural learning.

Let $\lambda_{+}$ be the largest eigenvalue of L, and let $\lambda_{-}$ be the smallest non-zero eigenvalue of $L$.
\begin{theorem}[Fluctuations can be made small]\label{thm:tilde_bounds}
After transients of rate $2\lambda_{-}$
\begin{equation*}
\frac{(n-1)\sigma^2}{2\lambda_{+}}\left(1-\frac{\nor{\bx}^2}{\lambda_{-}}\right) \leq
\bbE\bigl\|\tilde{\bw}(t)\bigr\|^2 \leq \frac{(n-1)\sigma^2}{2\lambda_{-}}
\end{equation*}
where $\widetilde{\bw}=P\bw(t)$.
\end{theorem}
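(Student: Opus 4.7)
My plan is to work directly from the squared-norm SDE~\eqref{eqn:flucts}, take expectations to obtain a scalar ODE for $m(t):=\bbE\|\Xt\|^2$, and then sandwich the right-hand side using \emph{different} analytic properties of $\tanh$ for the upper and lower bound. Throughout I will exploit the change of variables $\Xt=\|\bx\|^2\tilde{\bw}_t$ (which holds because $P\bbone=0$) and $\sigmat^2=\sigma^2\|\bx\|^4$, so dividing any bound on $m$ by $\|\bx\|^4$ yields the corresponding bound on $\bbE\|\tilde{\bw}\|^2$.

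After a routine localization that justifies $\bbE\!\int_0^t \sigmat\|\widetilde X_s\|\,dB_s=0$, taking expectations in~\eqref{eqn:flucts} produces
\begin{equation*}
\tfrac{1}{2}\dot{m}(t)\;=\;-\|\bx\|^2\,\bbE\langle\Xt,\tanh(X_t)\rangle \;-\;\bbE\langle\Xt,L\Xt\rangle \;+\;\tfrac{1}{2}(n-1)\sigmat^2 ,
\end{equation*}
where I used $LX_t=L\Xt$ (since $L\bbone=0$). Because $\Xt\perp\bbone$, the Laplacian term is pinched between $\lambda_{-}m$ and $\lambda_{+}m$.

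For the \emph{upper} bound I would exploit monotonicity of $\tanh$. Writing $(\Xt)_i=(X_t)_i-\Xb$ and adding/subtracting $\tanh(\Xb)$ inside the sum, the $\tanh(\Xb)$ cross-term vanishes because $\sum_i(\Xt)_i=0$, leaving
\begin{equation*}
\langle\Xt,\tanh(X_t)\rangle=\sum_i\bigl((X_t)_i-\Xb\bigr)\bigl(\tanh((X_t)_i)-\tanh(\Xb)\bigr)\ge 0 .
\end{equation*}
Dropping this nonnegative term and inserting the Rayleigh bound $\langle\Xt,L\Xt\rangle\ge\lambda_{-}\|\Xt\|^2$ yields $\dot m\le -2\lambda_{-}m+(n-1)\sigmat^2$, whence Gronwall gives $m(t)\le \tfrac{(n-1)\sigmat^2}{2\lambda_{-}}$ after transients of rate $2\lambda_{-}$. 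Dividing by $\|\bx\|^4$ recovers the stated upper bound on $\bbE\|\tilde{\bw}\|^2$.

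For the \emph{lower} bound the same identity combined with the fact that $\tanh$ is $1$-Lipschitz gives the reverse estimate $\langle\Xt,\tanh(X_t)\rangle\le\|\Xt\|^2$. Together with $\langle\Xt,L\Xt\rangle\le\lambda_{+}\|\Xt\|^2$ this yields
\begin{equation*}
\tfrac{1}{2}\dot m(t)\;\ge\;-\|\bx\|^2\,m(t)\;-\;\lambda_{+}\,m(t)\;+\;\tfrac{1}{2}(n-1)\sigmat^2 .
\end{equation*}
Here is the main obstacle: a naive steady-state inversion of this inequality produces only $m_\infty\ge \tfrac{(n-1)\sigmat^2}{2(\lambda_{+}+\|\bx\|^2)}$, which does not match the theorem. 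The key trick I expect to use is to \emph{feed the already-established upper bound back into the $\|\bx\|^2$ term}: at steady state, replacing $m(t)$ in $-\|\bx\|^2 m$ by the upper-bound value $\tfrac{(n-1)\sigmat^2}{2\lambda_{-}}$ yields
\begin{equation*}
0\;\ge\;-\|\bx\|^2\,\frac{(n-1)\sigmat^2}{2\lambda_{-}}\;-\;\lambda_{+}\,m_\infty\;+\;\tfrac{1}{2}(n-1)\sigmat^2 ,
\end{equation*}
which rearranges to $m_\infty\ge\tfrac{(n-1)\sigmat^2}{2\lambda_{+}}\bigl(1-\|\bx\|^2/\lambda_{-}\bigr)$; dividing by $\|\bx\|^4$ gives exactly the claimed bound. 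The conceptual difficulty throughout is that the nonlinearity $\tanh$ prevents a closed-form variance as one would have in a pure Ornstein--Uhlenbeck setting; the two sides of the bound therefore rely on qualitatively different properties of $\tanh$ (sign from monotonicity, size from Lipschitz constant), and coupling the two via the substitution step above is what keeps the factor $\lambda_{+}$ clean in the lower bound.
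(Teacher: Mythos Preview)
Your plan is correct and follows essentially the same route as the paper: both arguments rest on the sandwich $0\le\langle\Xt,\tanh(X_t)\rangle\le\|\Xt\|^2$, the Rayleigh bounds $\lambda_{-}\|\Xt\|^2\le\langle\Xt,L\Xt\rangle\le\lambda_{+}\|\Xt\|^2$, and the same ``feed the upper bound back into the $\|\bx\|^2$-term'' trick for the lower estimate. The only presentational difference is that the paper applies an integrating factor to the SDE~\eqref{eqn:flucts} and takes expectations at the end, whereas you take expectations first and work with the scalar ODE for $m(t)$; these are equivalent.

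Two remarks are worth making. First, your derivation of the $\tanh$ sandwich is considerably cleaner than the paper's Lemma~\ref{lem:tanh_scal}: writing $\langle\Xt,\tanh(X_t)\rangle=\sum_i\bigl((X_t)_i-\Xb\bigr)\bigl(\tanh((X_t)_i)-\tanh(\Xb)\bigr)$ and invoking monotonicity (for $\ge 0$) and $1$-Lipschitzness (for $\le\|\Xt\|^2$) replaces a page-long tangent-line argument by two lines, and in fact works for any nondecreasing $1$-Lipschitz nonlinearity. Second, your final ``at steady state set $\dot m=0$'' step is the one place that needs tightening: $m(t)$ need not have a limit, so you should instead substitute the upper bound into $-\|\bx\|^2 m$ to obtain the linear differential inequality $\dot m+2\lambda_{+}m\ge (n-1)\sigmat^2\bigl(1-\|\bx\|^2/\lambda_{-}\bigr)$ (valid once the upper-bound transients have passed) and then apply Gronwall exactly as you did for the upper bound. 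This is precisely what the paper does in integral form, and it yields the stated lower bound after transients of rate $2\lambda_{-}$.
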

Clearly the lower bound is informative only when $\lambda_{-} > \nor{\bx}^2$. While we do not
explicitly assume any particular bound on the size of the examples $\nor{\bx}$, it is reasonable that
$\lambda_{-} \gg \nor{\bx}^2$ since $\lambda_{-}$ can depend on the number of circuits $n$ and will
always depend on the coupling strength $\kappa$, which can be large. Large coupling strengths can be found in a variety of circumstances, particularly in the case of motor control circuits~\citep{Grandhe99,Kiemel03} for example.

In the next Theorem we give the variance of the fluctuations via a higher moment of $\nor{\widetilde{\bw}}$.
This result makes use of the lower bound in Theorem~\ref{thm:tilde_bounds}, and leads to a result that gives control of the fluctuations in probability rather than in expectation.
\begin{theorem}[Variance of the trajectory distances to the center of mass]\label{thm:fluct-var}
After transients of rate $2\lambda_{-}$
\begin{equation*}
\var\bigl(\nor{\tilde{\bw}(t)}^2\bigr) \leq
\left(\frac{(n-1)\sigma^{2}}{2\lambda_{-}}\right)^2\left(2 + \frac{4}{n-1}\right)
- \left(\frac{(n-1)\sigma^{2}}{2\lambda_{+}}\right)^2\left(1-\frac{\nor{\bx}^2}{\lambda_{-}}\right)^2 .
\end{equation*}
\end{theorem}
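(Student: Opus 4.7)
The plan is to bound the second moment $\bbE\|\tilde{\bw}\|^4$ from above and the first moment squared $(\bbE\|\tilde{\bw}\|^2)^2$ from below, then subtract. Since Theorem~\ref{thm:tilde_bounds} already gives one-sided control on $\bbE\|\tilde{\bw}\|^2$, the real work is deriving a companion steady-state estimate for $\bbE\|\tilde{\bw}\|^4$. The natural route is to apply Ito's lemma to $h(\tilde X_t) = \|\tilde X_t\|^4$ and then mimic the proof strategy presumably used for Theorem~\ref{thm:tilde_bounds}. Working with $X_t$ and using $\tilde X_t = \|\bx\|^2\tilde \bw_t$ to translate back at the end avoids carrying factors of $\|\bx\|^2$ throughout.

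Concretely, I would first set $Y_t = \|\tilde X_t\|^2$ and reuse Equation~\eqref{eqn:flucts} to read off its drift and diffusion. Applying Ito to $Y_t^2$ gives
\begin{equation*}
d(Y_t^2) \;=\; 2Y_t\, dY_t + 4\sigmat^{\,2} Y_t\, dt \;=\; \bigl[-4Y_t\bigl(\|\bx\|^2\langle\tilde X_t,\tanh X_t\rangle + \langle\tilde X_t, L X_t\rangle\bigr) + 2(n+1)\sigmat^{\,2}Y_t\bigr]dt + \text{(martingale)}.
\end{equation*}
Taking expectations and setting the left side to zero in the post-transient regime yields an identity of the form $2\,\bbE[Y_t(\|\bx\|^2\langle\tilde X_t,\tanh X_t\rangle + \langle\tilde X_t, L X_t\rangle)] = (n+1)\sigmat^{\,2}\bbE[Y_t]$. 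The next step is the key lower bound on the bracketed inner products: since $\tanh$ is monotone, $\langle\tilde X_t,\tanh X_t\rangle = \sum_i(X_{t,i}-\bar X_t)(\tanh X_{t,i}-\tanh\bar X_t)\ge 0$; and because $\tilde X_t\perp\bbone$ and $L\bbone=0$, we have $\langle\tilde X_t,L X_t\rangle = \langle\tilde X_t,L\tilde X_t\rangle \ge \lambda_{-}Y_t$. Substituting these in gives $\bbE[Y_t^{\,2}] \le \tfrac{(n+1)\sigmat^{\,2}}{2\lambda_{-}}\bbE[Y_t]$, and the upper bound from Theorem~\ref{thm:tilde_bounds} (rescaled: $\bbE[Y_t]\le (n-1)\sigmat^{\,2}/2\lambda_{-}$) provides a closed-form estimate of the form $\bbE[Y_t^{\,2}]\le C(n-1)(n+1)\sigmat^{\,4}/\lambda_{-}^{\,2}$ that, after dividing by $\|\bx\|^8$, reproduces the first term of the stated bound (up to the arithmetic identity $2+\tfrac{4}{n-1} = \tfrac{2(n+1)}{n-1}$).

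Having $\bbE[\|\tilde\bw\|^4]$ in hand, the variance is finished by subtracting $(\bbE\|\tilde\bw\|^2)^2$, which is bounded below by squaring the lower bound from Theorem~\ref{thm:tilde_bounds}; this produces the $(1-\|\bx\|^2/\lambda_{-})^2$ factor and the $\lambda_{+}$ appearing in the second term. The translation from $X_t$ to $\bw$ is a single rescaling by $\|\bx\|^{-8}$ that converts the $\sigmat^{\,4}=\sigma^4\|\bx\|^8$ into $\sigma^4$.

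The main obstacle is justifying the steady-state step: setting $\tfrac{d}{dt}\bbE[Y_t^{\,2}]=0$ requires that the fourth-moment process itself has converged, not merely the second moment. I would argue this by showing that $\bbE[Y_t^{\,2}]$ satisfies a linear differential inequality $\tfrac{d}{dt}\bbE[Y_t^{\,2}] \le -4\lambda_{-}\bbE[Y_t^{\,2}] + 2(n+1)\sigmat^{\,2}\bbE[Y_t]$, so that after transients of rate $2\lambda_{-}$ (the same rate governing Theorem~\ref{thm:tilde_bounds}) the fourth-moment inequality inherits the second-moment bound and yields the claimed steady-state estimate. A secondary technical point is carefully checking the non-negativity of $\langle\tilde X_t,\tanh X_t\rangle$ using the centering $\bbone^\top\tilde X_t=0$; everything else is routine bookkeeping of constants.
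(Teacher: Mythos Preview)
Your approach is correct and proves the theorem, but it is \emph{not} the route taken in the paper. The paper does not apply Ito's lemma to $\|\tilde X_t\|^4$ directly. Instead it starts from the almost-sure integral inequality for $\tfrac12\|\tilde X_t\|^2$ obtained en route to Theorem~\ref{thm:tilde_bounds} (the one with the stochastic integral $\sigmat\int_0^t e^{2\lambda_-(s-t)}\|\tilde X_s\|\,dB_s$ on the right), squares both sides using $(a+b)^2\le 2a^2+2b^2$, and then applies Ito's isometry to control the squared stochastic integral by $\int_0^t e^{4\lambda_-(s-t)}\bbE\|\tilde X_s\|^2\,ds$. Substituting the second-moment upper bound inside that integral produces the first term of the statement. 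Your direct method (Ito on $Y_t^2$, then the Gronwall/differential-inequality step) is cleaner and in fact yields the sharper fourth-moment bound $\bbE\|\tilde X_t\|^4\le (n-1)(n+1)\sigmat^4/(4\lambda_-^2)$, which is exactly \emph{half} of the paper's first term; so your claim that it ``reproduces'' that term via the identity $2+\tfrac{4}{n-1}=\tfrac{2(n+1)}{n-1}$ is off by a factor of two in your favor. The subtraction of $(\bbE\|\tilde\bw\|^2)^2$ using the lower bound of Theorem~\ref{thm:tilde_bounds}, and the rescaling by $\|\bx\|^{-8}$, are handled identically in both arguments. The trade-off: the paper's approach reuses the pathwise inequality already established and needs only Ito isometry, at the cost of the crude squaring inequality; yours requires the extra differential-inequality argument to justify convergence of the fourth moment but avoids that loss.
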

Chebyshev's inequality combined with Theorem~\ref{thm:fluct-var} immediately gives the following
Corollary.
\begin{corollary}\label{cor:cheby-fluct}
After transients of rate $2\lambda_{-}$
\begin{equation}\label{eqn:cheby-fluct}
\mathbb{P}\left[\bigl\|\tilde{\bw}(t)\bigr\|^2 - \bbE\bigl\|\tilde{\bw}(t)\bigr\|^2 \geq \varepsilon\right]
\leq \frac{\var\bigl(\nor{\tilde{\bw}(t)}^2\bigr)}{\varepsilon^2}.
\end{equation}
\end{corollary}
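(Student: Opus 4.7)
The plan is a one-line application of Chebyshev's inequality to the non-negative scalar random variable $Y_t := \|\tilde{\bw}(t)\|^2$. Recall the classical Chebyshev bound: for any real random variable $Z$ with finite second moment and any $\varepsilon > 0$,
\[
\mathbb{P}\bigl[|Z - \mathbb{E}Z| \geq \varepsilon\bigr] \leq \frac{\var(Z)}{\varepsilon^2}.
\]
Applying this with $Z = Y_t$ produces the two-sided analogue of~\eqref{eqn:cheby-fluct}. The one-sided statement in the corollary then follows by the elementary inclusion $\{Y_t - \mathbb{E}Y_t \geq \varepsilon\} \subseteq \{|Y_t - \mathbb{E}Y_t| \geq \varepsilon\}$, and monotonicity of probability gives the same right-hand side for the smaller event. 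Positivity of $Y_t$ plays no role in the upper bound; it only ensures that the lower tail is automatically clipped at $-\mathbb{E}Y_t$.

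There is essentially no analytic obstacle here; the entire content of the corollary resides in having a finite, controllable variance for $Y_t$. After transients of rate $2\lambda_{-}$, Theorem~\ref{thm:fluct-var} supplies precisely this --- an explicit finite upper bound on $\var(\|\tilde{\bw}(t)\|^2)$ in terms of $n$, $\sigma$, $\nor{\bx}$, $\lambda_{-}$, and $\lambda_{+}$ --- which both justifies invoking Chebyshev and may, if desired, be substituted into the right-hand side of~\eqref{eqn:cheby-fluct} to yield a fully explicit tail estimate. The transient qualifier is inherited verbatim from Theorem~\ref{thm:fluct-var}, so no new timing analysis is needed and the proof amounts to citing Chebyshev together with the preceding variance bound.
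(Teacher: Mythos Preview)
Your proposal is correct and matches the paper's own argument: the paper states that ``Chebyshev's inequality combined with Theorem~\ref{thm:fluct-var} immediately gives'' the corollary, which is exactly what you do. Your added remark about passing from the two-sided to the one-sided tail via inclusion is a welcome clarification but not a departure from the paper's approach.
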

Since any connected network graph has non-trivial eigenvalues which depend on the uniform coupling strength $\kappa$, we see that for fixed $n$ as $\kappa\to\infty$, $\var\bigl(\nor{\tilde{\bw}(t)}^2\bigr)\to 0$.
In the case of the all-to-all network topology, for example, the eigenvalues of $L$ depend on both $n$ and $\kappa$ so that $\var\bigl(\nor{\tilde{\bw}(t)}^2\bigr) = \mathcal{O}(\kappa^{-2})$ giving a power law decay of order $\mathcal{O}(\kappa^{-2}\varepsilon^{-2})$ on the right hand side of Equation~\eqref{eqn:cheby-fluct} in Corollary~\ref{cor:cheby-fluct}.

Finally, we turn to estimating in expectation the steady-state average distance between the trajectories of the circuit copies and the noise-free solution. As we have argued in Section~\ref{sec:stability}, the rate of convergence to equilibrium of the trajectories $w_i(t)$ can be arbitrarily small. Although from the Theorems above the fluctuations can be made small, one cannot in general make a similar statement about the center of mass $\bar{w}_t$ process unless assumptions about the initial conditions are made (and by extension, the same holds true for the
trajectories $w_i(t)$). Such an assumption would lead to control over the contribution of the $\tanh$ terms, and establishes a lower bound on the contraction rate. Rather than make a specific assumption however, we state a general result: We again provide a lower bound, this time following from the law
of large numbers governing sums of i.i.d. Gaussian random variables and the lower bound on the fluctuations provided by Theorem~\ref{thm:tilde_bounds}.
\begin{theorem}[Average distance to the noise-free trajectory]\label{thm:noise_free}
Denote by $w^{*}$ the minimizer of the squared-error objective~\eqref{eqn:obective_fn}. After transients of rate $2\lambda_{-}$
\[
\frac{\sigma^2}{n} +
\left[\frac{(n-1)\sigma^2}{2n\lambda_{+}}\left(1-\frac{\nor{\bx}^2}{\lambda_{-}}\right)\right]^{+} \leq
 \bbE\left[\frac{1}{n}\sum_{i=1}^n(w_i(t)-w^{*})^2\right] \leq
\frac{\sigma^2}{2\lambda_{-}} + \bbE\bigl[(\bar{w}_t-w^{*})^2\bigr]
\]
where $[\,\cdot\,]^{+}\equiv\max(0,\cdot)$.
\end{theorem}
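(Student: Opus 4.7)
The plan is to decompose the squared error into a fluctuation piece and a center-of-mass piece, then handle each using tools already in place. Writing $w_i - w^{*} = (w_i - \bar{w}_t) + (\bar{w}_t - w^{*})$, squaring, summing over $i$, and noting that $\sum_i (w_i - \bar{w}_t) = 0$ kills the cross term, I would first obtain the Pythagorean identity
\[
\frac{1}{n}\sum_{i=1}^n (w_i(t) - w^{*})^2 \;=\; \frac{1}{n}\bigl\|\tilde{\bw}(t)\bigr\|^2 \,+\, (\bar{w}_t - w^{*})^2,
\]
so that after taking expectations the problem reduces to estimating $\tfrac{1}{n}\bbE\|\tilde{\bw}\|^2$ and $\bbE[(\bar{w}_t-w^{*})^2]$ separately.

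For the upper bound, I would simply feed the upper half of Theorem~\ref{thm:tilde_bounds} into the first term to get $\tfrac{1}{n}\bbE\|\tilde{\bw}\|^2 \leq \tfrac{(n-1)\sigma^{2}}{2n\lambda_{-}} \leq \tfrac{\sigma^{2}}{2\lambda_{-}}$, and then add the remaining $\bbE[(\bar{w}_t-w^{*})^2]$ to recover the right-hand inequality exactly as stated.

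For the lower bound, the fluctuation contribution follows immediately by dividing the lower half of Theorem~\ref{thm:tilde_bounds} by $n$, with the outer positive part since that estimate is only informative once $\lambda_- > \|\bx\|^2$. What remains is to show $\bbE[(\bar{w}_t - w^{*})^2] \geq \sigma^2/n$. For this I would average~\eqref{eqn:component_sys} to derive the scalar SDE for $\bar{w}_t$: because $\bbone^{\tr}L = 0$ the Laplacian coupling cancels, and the projection of the $n$ i.i.d.\ Wiener increments onto the one-dimensional constant subspace collapses to a single scalar Brownian increment $\tfrac{\sigma}{\sqrt{n}}\,dB^{c}_t$, leaving the unit-bounded drift $-\tfrac{1}{n}\sum_i\tanh(\|\bx\|^2(w_i - w^{*}))$. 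Applying It\^o's formula to $(\bar{w}_t - w^{*})^2$ and imposing stationarity (after the transients have died out at the rate dictated by Theorem~\ref{thm:tilde_bounds}) yields a balance identity between this bounded drift-times-deviation expectation and the injected variance $\sigma^2/n$. The law-of-large-numbers piece then enters: the injected noise is, at every instant, the average of $n$ independent Gaussian increments and so contributes exactly $\sigma^2/n$ of variance per unit time to the center-of-mass direction, and this rate must appear as a floor on $\bbE[(\bar{w}_t-w^{*})^2]$; the Theorem~\ref{thm:tilde_bounds} lower bound on the fluctuations enters here as the mechanism that controls how the $\tanh$ terms distribute across the individual copies in the balance identity.

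The main obstacle is precisely this last lower bound on the center-of-mass deviation. As pointed out in Section~\ref{sec:stability}, the Laplacian supplies no restoring force on the constant subspace, and the saturated gradient has a convergence rate that degenerates to zero in the tails, so standard stochastic-contraction arguments are unavailable for $\bar{w}_t$. Instead one has to exploit, in a structural way, the independence and $\sigma^2/n$ per-unit-time variance of the projected noise, together with the $\ell^{\infty}$-boundedness of the saturated drift, to certify that a drift of magnitude at most one cannot cancel the irreducible injected noise contribution.
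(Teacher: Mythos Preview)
Your approach is essentially the paper's: the same Pythagorean decomposition into a fluctuation piece $\tfrac{1}{n}\|\tilde{\bw}\|^2$ and a center-of-mass piece $(\bar{w}_t-w^*)^2$, followed by the two halves of Theorem~\ref{thm:tilde_bounds} on the fluctuation part, is exactly what the paper does (it runs the argument through the $X_t$ variables and the identity $\|X_t\|^2 = \|\Xb\bbone\|^2 + \|\Xt\|^2 = \|\bx\|^4\|\bw-w^*\bbone\|^2$, but this is the same orthogonal splitting).

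The only substantive divergence is in justifying $\bbE[(\bar{w}_t - w^*)^2] \geq \sigma^2/n$. You set up an It\^o balance identity on $(\bar{w}_t-w^*)^2$ and correctly flag this as the main obstacle, noting that the saturated drift and the absence of Laplacian restoring force on the constant subspace block any direct contraction argument. The paper, by contrast, handles this step heuristically: it observes that in the fully synchronized limit the center of mass ``evolves essentially as $\bar{w}_t\sim w^* + \tfrac{\sigma}{\sqrt{n}}W_t$'' with $W_t$ interpreted as white noise, reads off $\bbE[(\bar{w}_t-w^*)^2]=\sigma^2/n$ from that, and then simply asserts that desynchronization can only increase this quantity. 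So your diagnosis that this step is the delicate one is accurate, and your It\^o-plus-stationarity route is in fact more concrete than what the paper supplies; but you should be aware that the paper does not carry out the balance-identity argument you sketch, and neither treatment furnishes a fully rigorous derivation of the $\sigma^2/n$ floor.
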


Theorem~\ref{thm:noise_free} says that average closeness of the noisy system to that of the
noise-free optimum is controlled by the tradeoff between the noise and the coupling strength, and
the number of circuit copies $n$. The former controls in large part the magnitude of the fluctuations, as discussed above. The latter quantity is the unavoidable linear averaging component, and  can be brought to zero only as fast as the law of large numbers allows, $\mathcal{O}(n^{-1/2})$ at best.
For fixed $n$ as $\lambda_{-}^{(n)}\to \infty$, the upper and lower bounds coincide since
$\bbE\bigl[(\bar{w}(t) - w^*)^2\bigr]\to \sigma^2/n$. As both $n\to\infty$ and $\kappa\to\infty$ Theorem~\ref{thm:noise_free} confirms that $w_i(t)\to w^{*}$ in expectation.
If the fluctuations are not made small however, linear averaging will be wrong, and the error will of course be greater. Just how bad linear averaging is when the fluctuations are allowed to be large is described in large part by the maximum curvature of the noise-free dynamics\footnote{One way to see this is to take the first-order Taylor expansion of the dynamics with integral remainder. The remainder term
can be upper bounded by the spectral radius of the Hessian matrix, which is related to curvature (see e.g. ~\citep{Tabareau10}).}.

Finally, we note that the estimates above depend on the number of samples $m$ only through the norm of the examples $\mathbf{x}$, and it is reasonable to assume that this quantity may be appropriately normalized based on the maximum values conveyed by subsystems or rates of neurons comprising the circuit in the case of population or rate codes, or maximum field strengths in the case of LFPs.  However, the requirement that the organism must collect $m$ observations before learning can proceed is not essential. We may also consider the {\em online} learning setting, where data are observed sequentially and updates to the parameters $(w)_i$ are made separately on the basis of each observation in temporal order. The analysis above studies convergence to and distance from the solution in the steady state, whatever that solution may be, given $m$ pieces of evidence. Thus the online setting can also be considered as long as the time between observations is longer than the transient periods. Indeed, in many scenarios learning and decision making processes in the brain can take place on short time scales relative to the time scale on which experience is accumulated. In this case when another piece of information arrives, the system moves to a region defined (stochastically) around a new steady-state. A complication can arise when the new point arrives during the transient period of the previous learning process -- before the system has had a chance to settle, on average, into the new equilibrium -- however we do not attempt to model this
situation here.

\section{Discussion}\label{sec:discussion}
The estimates given in Section~\ref{sec:theorems} quantify the tradeoff between the degree of synchronization and the noise (error), and the role this tradeoff plays in determining the uncertainty of a decision function learned by way of a stochastic, nonlinear dynamics. Estimates both in expectation and in probability were derived. We showed how and where both the coupling strength and the topology of the network of neural ensembles impact the extent to which the noise, and therefore uncertainty due to error, can be controlled. In particular, for most networks (see Section~\ref{sec:graphs_ex}) the effect of the noise can be reduced by either increasing the coupling strength or the number of redundant systems (or both), leading to a steady-state solution that is going to be closer to the ideal, error-free solution with high probability. From a technical standpoint, this is because fluctuations about the common trajectory are exactly the way in which the noise enters the picture; when the fluctuations are made small, the error is made small. In this way an organism may mitigate error imposed by a noisy, imperfect learning apparatus and solve a learning task to greater accuracy. Furthermore, synchronization and redundancy can both improve the {\em speed} of learning, in the sense that the rate of convergence to the steady state solution also depends on these mechanisms. Each of the bounds presented in Section~\ref{sec:theorems} above hold after transient terms of order $e^{-t\lambda_{-}}$ vanish, where $\lambda_{-}$ is the smallest non-zero eigenvalue of the network Laplacian. For any stable connected network, strong coupling strengths directly improve convergence rates to the steady-state, as seen by the dependence of $\lambda_{-}$ on $\kappa$. In the case of all-to-all (including approximately all-to-all and many random graphs), $\lambda_{-} = \mathcal{O}(n\kappa)$ so that both increased redundancy and sync will improve the speed of learning.

Our overarching goal has been to explore quantitatively the role of redundancy and synchronization in reducing error incurred by a stochastic, non-ideal learning and decision making neural process. We have gone about this by considering a model which emphasizes the underlying computations taking place rather than particular neural representations. Looking at the appropriate scale, we seek to address the precise meaning of ensemble measurements and population codes, as well as the information these codes convey about the underlying dynamics and signals. The results derived above support the notion that synchronization and redundancy play a more functional role in the context of learning processes in the brain, rather than being a mere epiphenomenon.

\subsection{Synchronization and Redundancy in the Brain}
Synchronization has been suggested, over a diverse history of experimental work, as a fundamental mechanism for improvement in precision and reduction of uncertainty in the nervous system (see e.g. ~\citep{Needleman:PD:2001,Enright80}). Redundancy too is an important and commonly occurring mechanism. In retinal ganglion cells~\citep{Croner:PNAS:93,Puchalla05} and heart cells~\citep{Clay:79} the spatial mean across coupled cells cancels out noise. Populations of hair cells in otoliths perform redundant, collaborative computations to achieve robustness~\citep{KandelBook,EliasmithBook}, and it has been suggested that multiple cortical (amygdala-thalamus) loops contribute to fear response/conditioning, and emotion more generally~\citep{LeDoux:AnnRevNeuro:00}. With motor tasks such as reaching or standing, it has been argued that planning and representation occurs at least partially in redundant coordinate systems and involve redundant degrees of freedom~\citep{Scholz99}. Todorov~\citep{Todorov:08} maintains that redundancy and noise combine to give rise to optimal muscle control policies, raising the interesting possibility that in some cases the impact of the noise may need to be adjusted but not necessarily eliminated altogether. On a more localized scale, reach direction has also been found to be conveyed by populations of neurons with overlapping tuning curves~\citep{Georgop:82} where synchrony within such populations plays an important role~\citep{Grammont99}. Multiple sensorimotor transformations involving disparate brain regions may be at play in the parietal cortex, where redundant sensory inputs from multiple modalities must be mapped into motor responses~\citep{Ting97,Pouget:97}. In the ascending auditory pathway, varying degrees of redundancy have been noted, and contribute to the robust representation of frequency and more complex auditory objects~\citep{Chechik06}. Ensemble measurements have also been connected to behavior and have been suggested as inputs to brain-machine interfaces, while in stochastic neural decision making it has been suggested that it is the collective behavior across multiple populations of neurons that is responsible for perception and decision making, rather than activity of a single neuron or population of neurons~\citep{Gigante:PLOS:09}.

In these examples and more generally, we suggest that redundancy plus feedback synchronization is a mechanism which may be used to improve the accuracy, robustness and speed of a learning process involving the relevant respective brain areas. This is separate from, and in contrast with, redundancies which are harnessed to specifically increase storage capacity, as in the case of associate memory models~\citep{HertzBook}. There, robustness to corruption is also achieved (via pattern completion dynamics) but the degree of robustness must be traded off against capacity. The primary function of such populations of neurons is to ostensibly store and retrieve memory patterns rather than to implement adaptive, learning dynamics while eliminating noise.

Another theme emerging from these instances of sync and redundancy, is that key computations may be seen as implemented by distant brain regions coupled together by way of long-distance projections and network ``hubs''. Recent experimental observations in {\em C. elegans} casts this interpretation in a  developmental light~\citep{Varier2011}, and suggests that such interactions occur from an early stage in life and are important for normal development in even simple organisms. Learning processes realized by such computations and interactions are certainly susceptible to noise, and must cope with this noise one way or another. We suggest that synchronization and redundancy are not only present and possible, but provide a ready, natural solution.

The ability to learn and make decisions reliably in the presence of uncertainty is of fundamental importance for survival of any organism. This uncertainty can be seen to arise from three distinct sources, and the approach discussed here treats only the first two: intrinsic neuronal noise, both local and in aggregate, and noise in the form of measurement error, under which we include error due to limitations in precision and nonlinearity in biological systems. A third and equally important source of error is that of uncertainty in the inference process itself~\citep{Shadlen:Nature:07,Shadlen:Science:09}. This uncertainty is specific to and inherent in the decision problem and is characterized by the posterior distribution over decisions given the experiential evidence.  Our work only considers uncertainty beyond that of the inference process, and as such is one part of a larger puzzle. We argue that intrinsic noise is both experimentally and theoretically important -- and involved enough technically -- to be addressed in isolation, while holding all other variables constant. Indeed, intrinsic noise intensities can be large. The role of the network's topology and coupling mechanism also strongly influences the overall picture, often in surprising or subtle ways. But it is also possible that the methods recruited here can be applied towards understanding some aspect of the inference error if different inferences from the same observations can be made by different ``expert'' (circuits) each with their own biases. Then averaging, nonlinearity and the uncertainty could potentially be treated in a similar framework.

\subsection{Extensions and Generalizations}
%We add a few final remarks addressing stability properties and noise assumptions for the stochastic %learning model introduced in Section~\ref{sec:setup}.
Asymptotic stability of the stochastic system considered here is guaranteed as long as there is coupling.
In general, if the dynamics of a stochastic system are contracting or can be made contracting with feedback, then combinations (e.g. parallel, serial, hierarchical) of such systems will be contracting~\citep{Pham09,Lohmiller98}. In the present setting, the system governing the fluctuations about the mean trajectory is contracting with a rate dependent on the coupling strength
and the noise variance. Thus combinations of learning systems of the general type considered here can enjoy strong stability guarantees automatically, since the individual systems are contracting.

Finally, we have assumed throughout that the errors affecting the collection of redundant neural circuits or systems are mutually independent. This is not an unreasonable modeling assumption: For large-scale learning processes involving different brain areas, noise imposed by local spike irregularities is largely unrelated to noise present in distant circuits. Within small populations of neurons, it is likely that dependence among intrinsic neuronal noise sources decays rapidly in space so that nearest-neighbors may experience somewhat correlated noise, but beyond this are not significantly impacted by other members of the population. As noises in a biological environment can never be fully dependent (whether due to thermal or chemical-kinetic factors, or otherwise), partial-dependence among noise inputs may be explicitly modeled as, for example, mixing processes if desired~\citep{DoukhanMixing}. Estimates of the form discussed here would then be augmented with mixing terms leading to results which make identical qualitative statements about the role of redundancy and sync. Fluctuations, and the effect of the noise, would still be reducible  but would require larger coupling strengths or more redundancy compared to what would be necessary if the noise sources were independent.

%%%%%%%%%%%%%%%%%%%%%%%%%%%%
\begin{figure}[t]
\centering
\includegraphics[height=5.5cm]{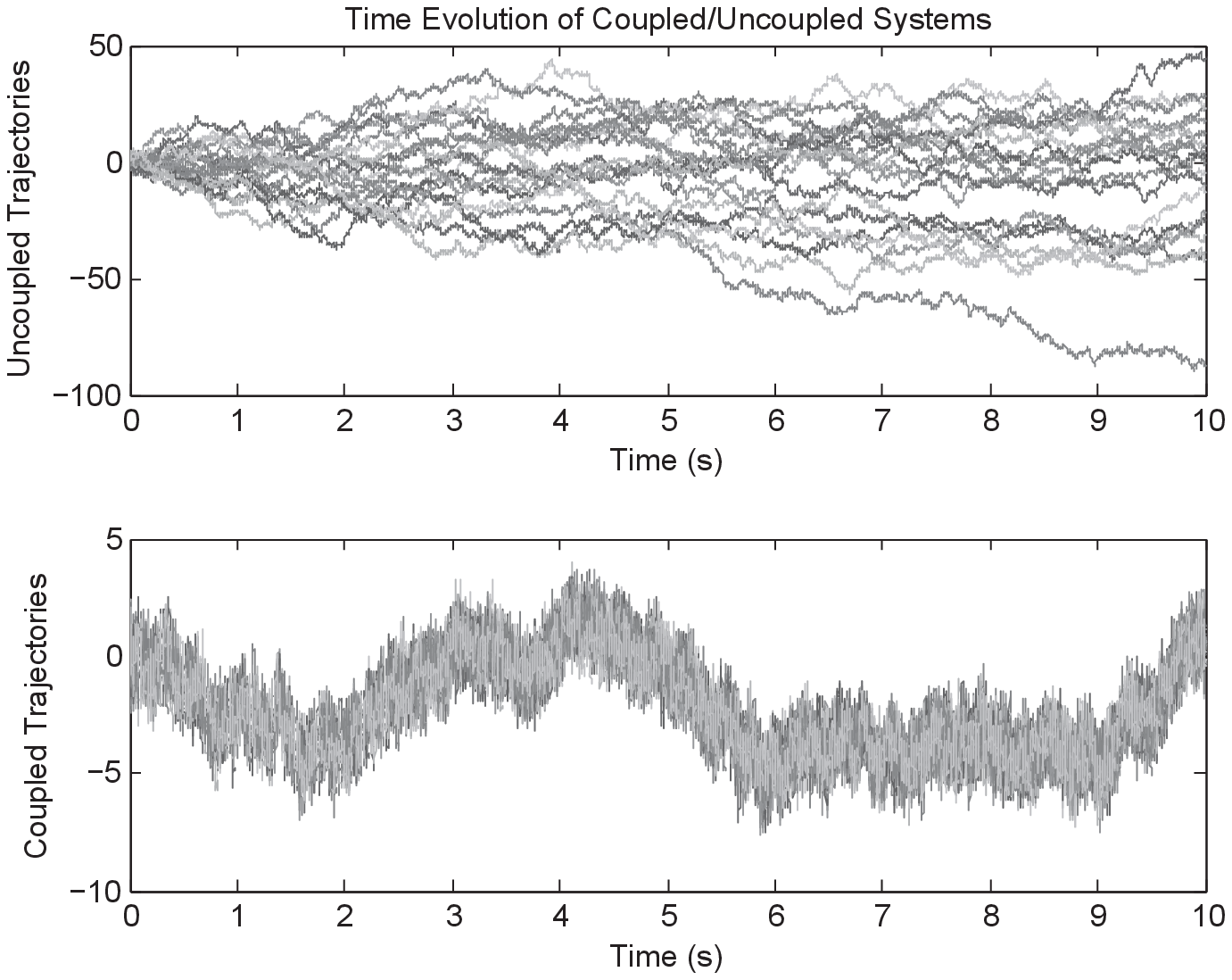}
\hskip 0.1cm
\includegraphics[height=5.5cm]{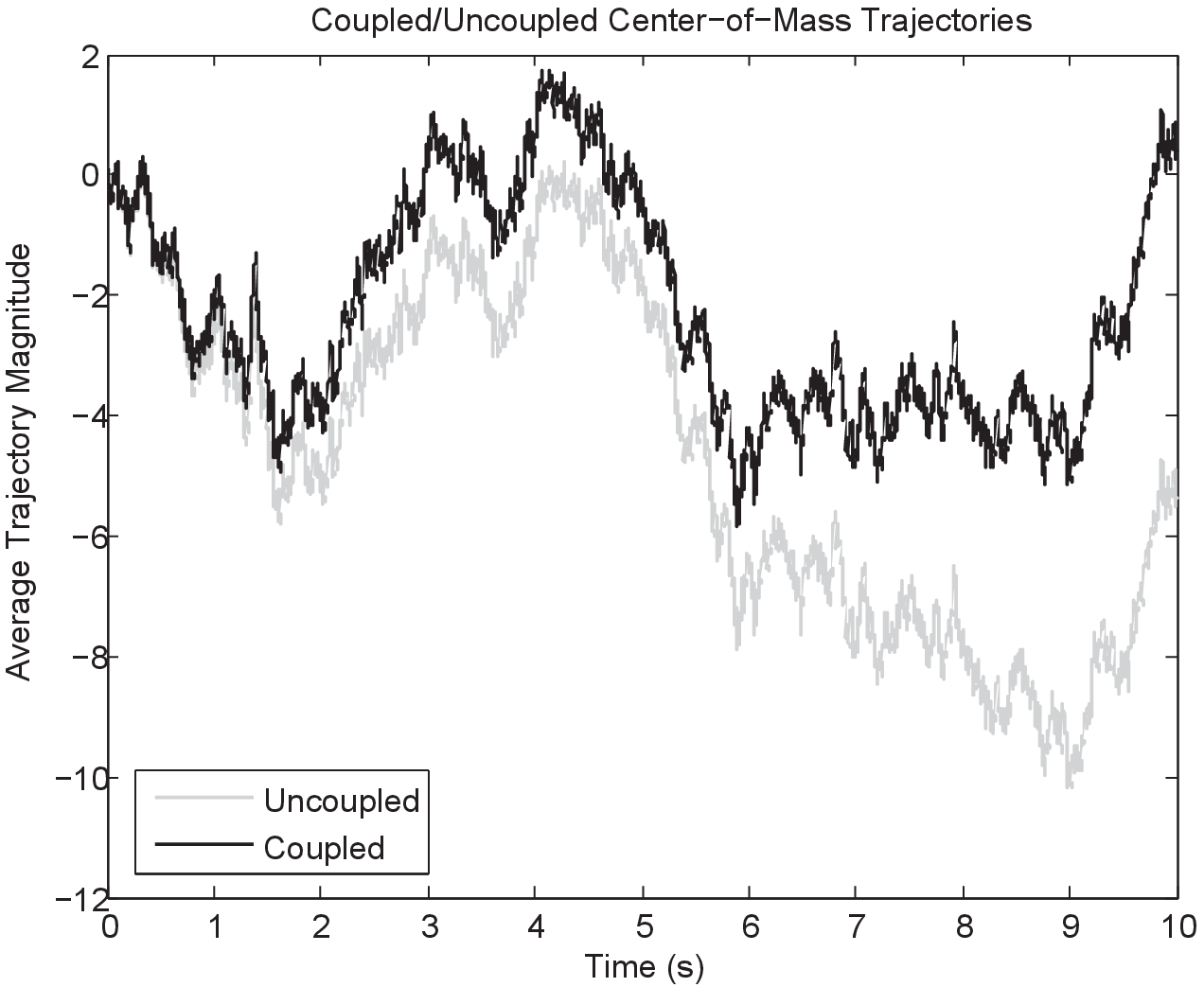}
\caption{{\em (Left) Typical simulated trajectories for coupled and uncoupled networks driven by the same noise. (Right) Population average trajectories for the coupled and uncoupled systems.}}
\label{fig:sims}
\end{figure}

\section{Simulations}\label{sec:expts}
To empirically test the estimates given in Section~\ref{sec:theorems}
we simulated several systems of SDEs given by Equation~\eqref{eqn:simple_sys} using Euler-Maruyama
integration (over time $t\in[0,10s]$, $10^5$ regularly spaced sample points), for different settings of the parameters $n$ (number of circuits or elements), $\kappa$
(coupling strength) and $\sigma$ (noise standard deviation). Initial conditions were randomly drawn from the uniform distribution on $[-5,5]$, and we fixed $\nor{\bx}^2=1$ and the coupling arrangement to all-to-all coupling with fixed strength determined by $\kappa$. For simplicity the simulated systems
had equilibrium point at zero, corresponding to $\by=0$, so that $\scal{x}{y}=0$ and $X^*=w^*$ (the
change of variables is the identity map and we can identify $X_t$ with $w_t$).

For comparison purposes we first show on the left in Figure~\ref{fig:sims} typical simulated trajectories of uncoupled (top) and coupled (bottom) populations when $n=20, \kappa=5, \sigma=10$. Both populations are driven by the same noise and the same set of initial conditions, however each element is driven by noise independent from the others as assumed above. From the units on the vertical axes, one can see that coupling clearly reduces inter-trajectory fluctuations as expected. On the right in Figure~\ref{fig:sims}, we show the coupled/uncoupled populations' respective center of mass trajectories for this particular simulation instance. One can see from this figure that the average of the coupled system tends closer to zero ($X^*$), and is less affected by large noise excursions.

\begin{table}[p]
\centering
\begin{tabularx}{0.9\textwidth}{ c | c | X | X}
Quantity & Lower Bound & Simulated & Upper Bound \\ \hline\hline
$\bbE\nor{\Xt}^2$ & 9.405 & 9.497 (std$=3.1$) & 9.500 \\  % fluctuations
$\var\bigl(\nor{\Xt}^2\bigr)$ & - & 9.450 (std$=14.7$) & 111.046 \\  % variance of fluctuations
$\tfrac{1}{n}\bbE\nor{X_t - X^*\bbone}^2$ & 5.470 & 12.249 (std$=22.2$) & 12.249 (std$=22.2$) \\  % distance to noise-free
\end{tabularx}
\caption{{\em Estimates vs. simulated quantities: $n=20, \kappa=5, \sigma=10$.}}
\label{tab:sims_20_5_10}
\end{table}

\begin{table}[p]
\centering
\begin{tabularx}{0.9\textwidth}{ c | c | X | X}
Quantity & Lower Bound & Simulated & Upper Bound \\ \hline\hline
$\bbE\nor{\Xt}^2$ & 11.281 & 11.719 (std$=3.8$) & 11.875 \\  % fluctuations
$\var\bigl(\nor{\Xt}^2\bigr)$ & - & 14.261 (std$=23.0$) & 184.45 \\  % variance of fluctuations
$\tfrac{1}{n}\bbE\nor{X_t - X^*\bbone}^2$ & 1.814 & 1.933 (std$=2.5$)  & 1.946 (std$=2.4$)\\  % distance to noise-free
\end{tabularx}
\caption{{\em Estimates vs. simulated quantities: $n=20, \kappa=1, \sigma=5$.}}
\label{tab:sims_20_1_5}
\end{table}

\begin{table}[p]
\centering
\begin{tabularx}{0.9\textwidth}{ c | c | X | X}
Quantity & Lower Bound & Simulated & Upper Bound \\ \hline\hline
$\bbE\nor{\Xt}^2$ & 45.125 & 47.053 (std$=15.2$) & 47.500 \\  % fluctuations
$\var\bigl(\nor{\Xt}^2\bigr)$ & - & 230.275  (std$=373.4$)& 2951.234 \\  % variance of fluctuations
$\tfrac{1}{n}\bbE\nor{X_t - X^*\bbone}^2$ & 7.256 & 14.761 (std$=24.1$) & 14.784 (std$=24.1$)\\  % distance to noise-free
\end{tabularx}
\caption{{\em Estimates vs. simulated quantities: $n=20, \kappa=1, \sigma=10$.}}
\label{tab:sims_20_1_10}
\end{table}

\begin{table}[p]
\centering
\begin{tabularx}{0.9\textwidth}{ c | c | X | X}
Quantity & Lower Bound & Simulated & Upper Bound \\ \hline\hline
$\bbE\nor{\Xt}^2$ & 49.005 & 49.556 (std$=7.0$) & 49.500 \\  % fluctuations
$\var\bigl(\nor{\Xt}^2\bigr)$ & - & 49.332 (std$=70.6$) & 2598 \\  % variance of fluctuations
$\tfrac{1}{n}\bbE\nor{X_t - X^*\bbone}^2$ & 1.490 & 1.449 (std$=1.6$) & 1.449 (std$=1.6$) \\  % distance to noise-free
\end{tabularx}
\caption{{\em Estimates vs. simulated quantities: $n=100, \kappa=1, \sigma=10$.}}
\label{tab:sims_100_1_10}
\end{table}

\begin{table}[p]
\centering
\begin{tabularx}{0.9\textwidth}{ c | c | X | X}
Quantity & Lower Bound & Simulated & Upper Bound \\ \hline\hline
$\bbE\nor{\Xt}^2$ & 9.880 & 10.137 (std$=1.5$) & 9.900 \\  % fluctuations
$\var\bigl(\nor{\Xt}^2\bigr)$ & - & 2.151 (std$=3.2$)& 102.362 \\  % variance of fluctuations
$\tfrac{1}{n}\bbE\nor{X_t - X^*\bbone}^2$ & 1.099 & 1.496 (std$=1.5$)  & 1.496 (std$=1.5$) \\  % distance to noise-free
\end{tabularx}
\caption{{\em Estimates vs. simulated quantities: $n=100, \kappa=5, \sigma=10$.}}
\label{tab:sims_100_5_10}
\end{table}

To empirically test tightness of the estimates given in Section~\ref{sec:theorems}, we
repeated simulations of each respective system $5000$ times, and averaged the relevant outcomes
to approximate the expectations appearing in the bounds. Transient periods were excluded in all
cases. In Tables~\ref{tab:sims_20_5_10} through~\ref{tab:sims_100_5_10} we show the values predicted by the bounds and the corresponding simulated quantities, for each respective triple of system parameter settings. Sample standard deviations of the simulated averages (expectations) are given in parentheses. In Figure~\ref{fig:sim_flucts} we show theoretical versus simulated expected magnitudes of the fluctuations $\bbE\nor{\Xt}^2$ when $n=200$ and $\sigma=10$ over a range of coupling strengths. The solid dark trace is the upper bound of Theorem~\ref{thm:tilde_bounds}, while the open circles are the average simulated quantities (again $5000$ separate simulations were run for each $\kappa$). Error bars are also given for the simulated expectations. Note that the magnitude scale ($y$-axis) is logarithmic, so the error bars are also plotted on a $\log$ scale. We omitted the lower theoretical bound from the plot because it is too close to the upper bound to visualize well relative to the scale of the bounds.

\begin{figure}[t]
\centering
\includegraphics[height=9cm]{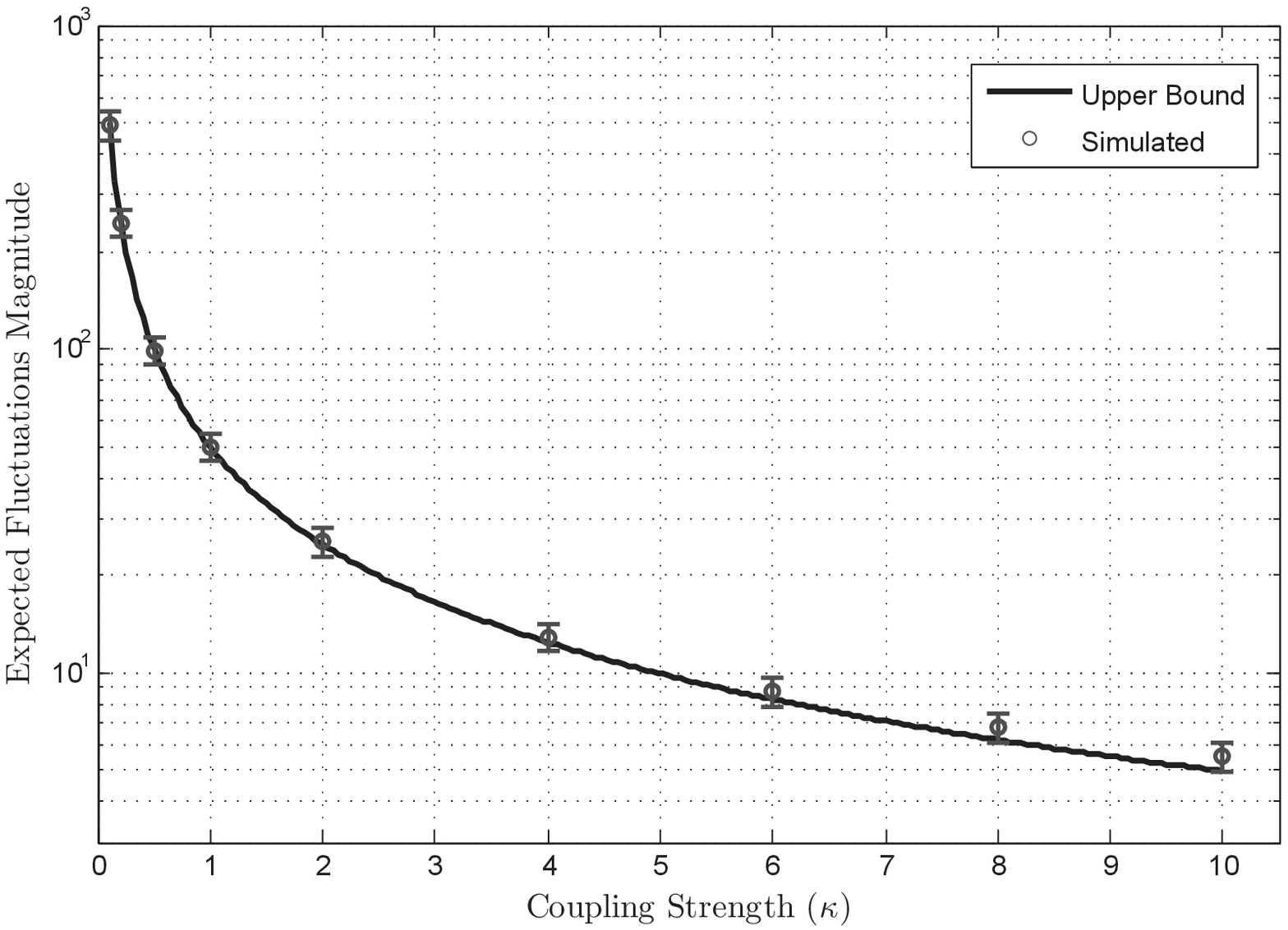}
\caption{{\em Simulated vs. theoretical upper bound estimates of the fluctuations' expected magnitude over a range of coupling strengths $\kappa$. Here $n=200$ circuits and $\sigma=10$.}}
\label{fig:sim_flucts}
\end{figure}

 Generally, the estimates relating to the magnitude of the fluctuations are seen to be tight, and the variance estimate is within an order of magnitude. For the experiments with large noise amplitudes, the empirical estimates can appear to slightly violate the bounds where the bounds are tight since the variance across simulations is large. The lower bound estimating the distance of the center of mass to the noise-free solution is also seen to be reasonably good. For comparison, we give the upper estimate where the empirical distance is substituted in place of the expectation in order to show closeness to the lower bound. Theorem~\ref{thm:noise_free} predicts that the upper and
lower estimates will eventually coincide if $\kappa$ and/or $n$ are chosen large enough.

\section{Proofs}\label{sec:proofs}
In this section we provide proofs of the results discussed in Section~\ref{sec:theorems}.

We first introduce a key Lemma to be used in the development immediately below.
\begin{lemma}\label{lem:tanh_scal}
Let $P = I - (1/n)\bbone\bbone^{\tr}$, the canonical projection onto the zero mean subspace of $\bbR^n$.
Then for all $x\in \bbR^n$
\[
0 \leq \scal{Px}{\tanh(x)} \leq \nor{Px}^2,
\]
where the hyperbolic tangent applies elementwise.
\end{lemma}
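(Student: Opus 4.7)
The plan is to work componentwise. Writing $\bar{x} = (1/n)\bbone^{\tr}x$ so that $(Px)_i = x_i - \bar{x}$, we have
\[
\scal{Px}{\tanh(x)} = \sum_{i=1}^n (x_i - \bar{x})\tanh(x_i).
\]
The key observation is that $\sum_i (x_i - \bar{x}) = \bbone^{\tr} Px = 0$, so any constant may be subtracted from $\tanh(x_i)$ without changing the sum. In particular, subtracting $\tanh(\bar{x})$ gives the symmetric form
\[
\scal{Px}{\tanh(x)} = \sum_{i=1}^n (x_i - \bar{x})\bigl(\tanh(x_i) - \tanh(\bar{x})\bigr).
\]

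The lower bound follows immediately: since $\tanh$ is monotonically increasing on $\bbR$, each factor $(x_i - \bar{x})$ has the same sign as $(\tanh(x_i) - \tanh(\bar{x}))$, so every summand is non-negative.

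For the upper bound I would apply the mean value theorem to each summand, yielding a $c_i$ between $x_i$ and $\bar{x}$ with $\tanh(x_i) - \tanh(\bar{x}) = (1 - \tanh^2(c_i))(x_i - \bar{x})$. Since $0 \leq 1 - \tanh^2(c_i) \leq 1$ for any $c_i \in \bbR$, each term is bounded above by $(x_i - \bar{x})^2$, and summing gives $\scal{Px}{\tanh(x)} \leq \sum_i (x_i - \bar{x})^2 = \nor{Px}^2$.

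There is no real obstacle here; the only thing one must notice is the zero-mean cancellation that allows replacing $\tanh(x_i)$ by $\tanh(x_i) - \tanh(\bar{x})$. Once that step is taken, both bounds reduce to elementary properties of the scalar function $\tanh$ (monotonicity and the fact that $\tanh$ is $1$-Lipschitz, i.e.\ $\tanh' \leq 1$). The same argument in fact generalizes without change to any monotone $1$-Lipschitz scalar nonlinearity applied elementwise, which may be a useful remark for extending the downstream uncertainty estimates beyond the saturated gradient used in the paper.
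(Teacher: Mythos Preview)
Your proof is correct, and it is considerably cleaner than the paper's. Both arguments begin the same way for the lower bound: exploit that $Px$ has zero mean so that a constant can be subtracted from the second factor, then invoke monotonicity. The paper does this via the sigmoid representation $\tanh(z)=2s(2z)-1$ and the pivot value $s(2\mu)$, while you subtract $\tanh(\bar x)$ directly; the content is the same.

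The real divergence is in the upper bound. The paper reduces to showing $\scal{Px}{s(2x)-x}\leq 0$ and then performs a case split on the sign of $\mu=\bar x$, bounding $s(2z)$ above and below by carefully chosen affine functions (the tangent line at $\mu$ on one side, a secant-like line on the other), and finally uses the zero-mean identity $\sum_{I_+}(Px)_i=\sum_{I_-}|(Px)_i|$ to collapse everything. Your route sidesteps all of this: once $\tanh(\bar x)$ has been subtracted, each summand is nonnegative and the mean value theorem (equivalently, $\tanh'\leq 1$) bounds it by $(x_i-\bar x)^2$ in one stroke. This is shorter, avoids the sign cases entirely, and as you observe generalizes verbatim to any nondecreasing $1$-Lipschitz scalar nonlinearity, which the paper's tangent-line argument does not obviously do.
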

\begin{proof}
 Given $x\in\bbR^n$, define the index sets $I=\{1,\ldots,n\}$,
$I_{+}=\{i\in I~|~(Px)_i \geq 0\}$, and $I_{-}= I\setminus I_{+}$. Since $Px$ is zero mean,
$\sum_{i\in I_{+}}(Px)_i = \sum_{i\in I_{-}}|(Px)_i|$.
We will express the hyperbolic tangent as
$\tanh(z) = 2s(2z) - 1$, where
$s(z) = (1 + e^{-z})^{-1}$ is the logistic sigmoid function.
If we let $\mu = \tfrac{1}{n}\bbone^{\tr}\!x$ be the center of mass of $x$,
$(Px)_i = x_i - \mu \geq 0$ implies $s(x_i) \geq s(\mu)$ by monotonicity of s. Likewise,
$(Px)_i < 0$ implies $s(x_i) < s(\mu)$. Finally, note that since $P^2 = P$ and $\bbone\in\ker P$,
$
\scal{Px}{\tanh(x)} = \scal{Px}{P\bigl(2s(2x) - \bbone\bigl)} =
 2\scal{Px}{s(2x)}.
$
Using these facts, we prove the lower bound first:
\begin{align*}
\scal{Px}{\tanh(x)}  & = 2\sum_{i\in I_{+}}(Px)_i s(2x_i) -
2\sum_{i\in I_{-}}|(Px)_i| s(2x_i) \\
&\geq 2s(2\mu)\sum_{i\in I_{+}}(Px)_i - 2s(2\mu)\sum_{i\in I_{-}}|(Px)_i| \\
&= 2s(2\mu)\cdot 0 = 0.
\end{align*}
Turning to the upper bound, we prove the equivalent statement
$\scal{Px}{s(2x) - x}\leq 0$. First, if $\mu=0$, then $Px=x$ so
$\scal{Px}{\tanh(x)}=\scal{x}{\tanh(x)}\leq \nor{x}\nor{\tanh(x)}\leq\nor{x}^2 = \nor{Px}^2$, since
$\nor{\tanh(x)}\leq\nor{x}$ by virtue of the fact that $|\tanh(z)|=\tanh(|z|)\leq |z|$ for any $z\in\bbR$.
Now suppose that $\mu > 0$. If $z \geq \mu > 0$, we can upper bound $s(2z)$ by the line tangent to the point
$(\mu,s(2\mu))$: $s(2z) \leq mz + b$ with $m<\tfrac{1}{2}$ and $b>\frac{1}{2}$. If $z < \mu$, we
can take the lower bound $s(2z) > \tfrac{1}{2}z + \tfrac{1}{2}\mu - s(2\mu)$. Using these estimates, we have
that
\begin{align*}
\scal{Px}{s(2x) - x} &= \sum_{i\in I_{+}}(Px)_i \bigl(s(2x_i) - x_i\bigr) + \sum_{i\in I_{-}}|(Px)_i|\bigl(x_i - s(2x_i)\bigr) \\
&\leq \sum_{i\in I_{+}}(Px)_i \bigl(b - (1-m)x_i) +
\sum_{i\in I_{-}}|(Px)_i|\bigl(\tfrac{1}{2}x_i +\tfrac{1}{2}\mu -s(2\mu)\bigr) \\
&\leq \sum_{i\in I_{+}}(Px)_i \bigl(b - (1-m)\mu) +
\sum_{i\in I_{-}}|(Px)_i|\bigl(\tfrac{1}{2}\mu +\tfrac{1}{2}\mu -s(2\mu)\bigr) \\
&= \Bigl(\sum_{i\in I_{+}}(Px)_i\Bigr)\bigl(b + m\mu - s(2\mu)\bigr) = 0.
\end{align*}
The second inequality follows from the fact that $(1-m)>0$, $x_i \geq \mu$ for $i\in I_{+}$ and $x_i < \mu$
for $i\in I_{-}$. Since $\sum_{i\in I_{+}}(Px)_i = \sum_{i\in I_{-}}|(Px)_i|$, and recalling that by definition $b$
satisfies $m\mu + b = s(2\mu)$, the final equalities follow. If $\mu < 0$, then the proof is similar, taking
the line tangent to the point $(\mu,s(2\mu))$ as a lower bound for $s(2z)$ and the line
$\tfrac{1}{2}(z - \mu) +s(2\mu)$ as an upper bound.
\end{proof}

\subsection{Fluctuations Estimates: Proof of Theorem~\ref{thm:tilde_bounds}}
%% Key Lemma used to go here
We begin by adding $\lambda\nor{X_t}^2dt$, with $\lambda\in(0,\infty)$, to both sides of
Equation~\eqref{eqn:flucts} to obtain
\begin{align*}
\tfrac{1}{2}d\nor{\Xt}^2 + \lambda\nor{\Xt}^2dt &= -\nor{\bx}^2\scal{\tanh X_t}{\Xt}dt +
(\lambda\nor{\Xt}^2 - \scal{LX_t}{\Xt})dt \\
& \quad + \frac{1}{2}(n-1)\sigmat^2dt + \sigmat\nor{\Xt}dB_t %\\
= e^{-2\lambda t}d(\tfrac{1}{2}\nor{\Xt}^2e^{2\lambda t}) ,
\end{align*}
where the second equality follows noticing that the right hand side is the total Ito derivative
of the left hand side of the first equality. Now multiply both sides by $e^{2\lambda t}$,
switch to integral form, and multiply both sides by $e^{-2\lambda t}$  to arrive at
\begin{equation}\label{eqn:int_form}
\begin{split}
\tfrac{1}{2}\nor{\Xt}^2 &= e^{-2\lambda t}\nor{\widetilde{X}_0}^2 +
\int_{0}^{t} e^{2\lambda(s-t)}\Bigl(\frac{1}{2}(n-1)\sigmat^2 -
\nor{\bx}^2\scal{\tanh X_s}{\widetilde{X}_s}\Bigr)ds \\
 &+ \int_{0}^{t} e^{2\lambda(s-t)}(\lambda\nor{\widetilde{X}_s}^2 -\scal{LX_s}{\widetilde{X}_s})ds +
\sigmat\int_{0}^{t} e^{2\lambda(s-t)}\nor{\widetilde{X}_s}dB_s .
\end{split}
\end{equation}

\noindent{\bf Upper Bound:}
Next, note that $\scal{LX_t}{\Xt} = \scal{L\Xt}{\Xt}$ since $L(\Xb\bbone)=0$, and that
$\Xt$ is by definition orthogonal to any constant vector. For all $t$ we also have that
%$\lambda_{-}\nor{\Xt}^2 -\scal{L\Xt}{\Xt} \leq 0$ and $-\scal{\tanh X_t}{\Xt} \leq 0$
\begin{equation}\label{eqn:upper_ests}
\begin{aligned}
\lambda_{-}\nor{\Xt}^2 -\scal{L\Xt}{\Xt} & \leq 0 \\
-\scal{\tanh X_t}{\Xt} &\leq 0
\end{aligned}
\end{equation}
almost surely. The first inequality follows from the fact that for all $x\in\im P$,
\[
\lambda_{-}\nor{\Xt}^2\leq\scal{L\Xt}{\Xt}\leq\lambda_{+}\nor{\Xt}^2,
\]
if $\lambda_{-}$ is the Fiedler eigenvalue of $L$ and $\lambda_{+}$ is the largest eigenvalue of $L$.
The second inequality is given by Lemma~\ref{lem:tanh_scal}.
Setting $\lambda\equiv\lambda_{-}$ and applying the inequalities~\eqref{eqn:upper_ests} to
%Setting $\lambda\equiv\lambda_{-}$ and applying the inequalities above to
Equation~\eqref{eqn:int_form} gives the estimate
\begin{align}\label{eqn:flucts_beforeE}
\tfrac{1}{2}\nor{\Xt}^2 &\leq e^{-2\lambda_{-}t}\nor{\widetilde{X}_0}^2 +
\frac{(n-1)\sigmat^2}{2}\int_{0}^{t} e^{2\lambda_{-}(s-t)}ds +
\sigmat\int_{0}^{t} e^{2\lambda_{-}(s-t)}\nor{\widetilde{X}_s}dB_s \nonumber \\
&= e^{-2\lambda_{-}t}\nor{\widetilde{X}_0}^2 + \frac{(n-1)\sigmat^2}{4\lambda_{-}}(1-e^{-2\lambda_{-}t})  +
\sigmat\int_{0}^{t} e^{2\lambda_{-}(s-t)}\nor{\widetilde{X}_s}dB_s
\end{align}
almost surely.
Taking expectations and noting that
$\bbE\left[\int_{0}^{t} e^{2\lambda_{-}(s-t)}\nor{\widetilde{X}_s}dB_s\right] = 0,
$
 we have that
\begin{equation}\label{eqn:bnd_upper}
\bbE\nor{\Xt}^2 \leq \frac{(n-1)\sigmat^2}{2\lambda_{-}}
\end{equation}
after transients of rate $2\lambda_{-}$.

\noindent{\bf Lower Bound:}
We show that $\bbE \nor{\Xt}^2$ has a lower bound that can also be expressed in terms of the coupling strength and the noise level.
The derivation is similar to that of the upper bound, and we begin with Equation~\eqref{eqn:int_form}.
We set
$\lambda\equiv\lambda_{+}$ and apply the estimates $\lambda_{+}\nor{\widetilde{X}_s}^2 -\scal{L\widetilde{X}_s}{\widetilde{X}_s} \geq 0$ and $\scal{\tanh X_s}{\widetilde{X}_s} \leq \nor{\widetilde{X}_s}^2$
for all $s$ a.s., yielding
\begin{equation*}
\tfrac{1}{2}\nor{\Xt}^2 \geq e^{-2\lambda_{+}t}\nor{\widetilde{X}_0}^2 +
\int_{0}^{t} e^{2\lambda_{+}(s-t)}\Bigl(\frac{1}{2}(n-1)\sigmat^2 - \nor{\bx}^2\nor{\widetilde{X}_s}^2\Bigr)ds +
\sigmat\int_{0}^{t} e^{2\lambda_{+}(s-t)}\nor{\widetilde{X}_s}dB_s .
\end{equation*}
Taking expectations and integrating the Ito term, we have
\begin{equation*}
\tfrac{1}{2}\bbE\nor{\Xt}^2 \geq
e^{-2\lambda_{+}t}\bbE\nor{\widetilde{X}_0}^2 +
\frac{(n-1)\sigmat^2}{4\lambda_{+}}(1-e^{-2\lambda_{+}t}) -
\nor{\bx}^2\int_{0}^{t} e^{2\lambda_{+}(s-t)}\bbE\nor{\widetilde{X}_s}^2 ds .
\end{equation*}
After transients of rate $2\lambda_{-}$, we can apply~\eqref{eqn:bnd_upper} to estimate the remaining integral and
lower bound the above equation by
\[
 e^{-2\lambda_{+}t}\bbE\nor{\widetilde{X}_0}^2 +
\frac{(n-1)\sigmat^2}{4\lambda_{+}}(1-e^{-2\lambda_{+}t}) -
\nor{\bx}^2\frac{(n-1)\sigmat^2}{4\lambda_{-}\lambda_{+}}(1-e^{-2\lambda_{+}t}) .
\]
Since $\lambda_{-}\leq\lambda_{+}$, transients of rate $2\lambda_{+}$ have already transpired if we
suppose that we have waited for transients of rate $2\lambda_{-}$. Therefore, we can say that
after transients of rate $2\lambda_{-}$,
\begin{equation}\label{eqn:bnd_lower}
\bbE\nor{\Xt}^2 \geq \frac{(n-1)\sigmat^2}{2\lambda_{+}}\left(1-\frac{\nor{\bx}^2}{\lambda_{-}}\right).
\end{equation}

\subsubsection{Inverting the change of variables}\label{sec:reverse_ch}
Finally, we can obtain corresponding upper and lower bounds for the original system~\eqref{eqn:orig_sys} noting that since $\Xt = P\bigl(\bw(t)\nor{\bx}^2 - \scal{\bx}{\by}\bbone\bigr) = \nor{\bx}^2P\bw(t)$, we have $\bbE\nor{\tilde{\bw}}^2 = \bbE\nor{\Xt}^2/\nor{\bx}^4$, where we have used the notation $\tilde{\bw}$ for $P\bw$. The $\nor{\bx}^4$ in the denominator then cancels with the same quantity occurring in $\sigmat^2$ in Equations~\eqref{eqn:bnd_upper} and~\eqref{eqn:bnd_lower}, giving the final form shown in Theorem~\ref{thm:tilde_bounds}.

\subsection{Fluctuations Estimates: Proof of Theorem~\ref{thm:fluct-var}}
We first derive the fourth moment of the norm of the fluctuations.
Starting from Equation~\eqref{eqn:flucts_beforeE}, allow transients of rate $2\lambda_{-}$ to pass so
that we are left with the integral inequality
\[
\tfrac{1}{2}\nor{\Xt}^2 \leq  \frac{(n-1)\sigmat^2}{4\lambda_{-}} +
\sigmat\int_{0}^{t} e^{2\lambda_{-}(s-t)}\nor{\widetilde{X}_s}dB_s .
\]
Squaring both sides, we can apply the identity $(a+b)^2\leq 2a^2 + 2b^2$
to obtain
\[
\nor{\Xt}^4 \leq \left(\frac{(n-1)\sigmat^2}{\sqrt{2}\lambda_{-}}\right)^2 +
8\sigmat^2\left(\int_{0}^{t}e^{2\lambda_{-}(s-t)}\nor{\widetilde{X}_s}dB_s\right)^2 .
\]
Taking expectations and invoking Ito's Isometry for the second term leads to
\begin{align*}
\bbE\nor{\Xt}^4 &\leq \left(\frac{(n-1)\sigmat^2}{\sqrt{2}\lambda_{-}}\right)^2 +
8\sigmat^2\int_{0}^{t}e^{4\lambda_{-}(s-t)}\bbE\nor{\widetilde{X}_s}^2ds \\
 &\leq \left(\frac{(n-1)\sigmat^2}{\sqrt{2}\lambda_{-}}\right)^2 +
\frac{8\sigmat^2}{4\lambda_{-}}\left(\frac{(n-1)\sigmat^2}{2\lambda_{-}}\right)
= \left(\frac{(n-1)\sigmat^2}{2\lambda_{-}}\right)^2\left(2 + \frac{4}{n-1}\right)
\end{align*}
where the estimate~\eqref{eqn:bnd_upper} has been substituted in for $\bbE\nor{\widetilde{X}_s}^2$.
An upper bound on the variance is then obtained from the identity $\var(Z^2)=\bbE[Z^4] - (\bbE Z^2)^2$ and the lower estimate given in Equation~\eqref{eqn:bnd_lower}. Reversing the change of variables as in
Section~\ref{sec:reverse_ch} yields the final result.

\subsection{Distance to the Noise-Free Trajectory: Proof of Theorem~\ref{thm:noise_free}}
Theorem~\ref{thm:tilde_bounds} can be applied towards providing a lower bound for the average distance between the noisy trajectories of the neural circuit and the noise-free solution to the learning problem. First observe that from the orthogonal decomposition $X_t = PX_t + QX_t$ and the change of variables
mapping~\eqref{eqn:orig_sys} to~\eqref{eqn:simple_sys},
\begin{equation}\label{eqn:norm_decomp}
\nor{X_t}^2 = \nor{\Xb\bbone}^2 + \nor{\Xt}^2 = \nor{\bx}^4\nor{\bw - w^{*}\bbone}^2.
\end{equation}
Furthermore, we have that
\[
\Xb = n^{-1}\sum_iX_i(t) = n^{-1}\sum_i(w_i\nor{\bx}^2 - \scal{\bx}{\by}),
\]
so evidently $\nor{\bx}^{-4}\bbE\Xb^2 = \bbE[(\bar{w}_t - w^*)^2]$.
Next, note that if the fluctuations are small, the trajectories $(w_i(t))_{i=1}^n$ are close
to one another and the average trajectory $\bar{w}_t = n^{-1}\bw(t)^{\tr}\bbone$ evolves essentially
as $\bar{w}_t\sim w^* + \frac{\sigma}{\sqrt{n}}W_t$, where $W_t$ is interpreted as a white noise
process. In this case we then have that $\bbE[(\bar{w}_t-w^*)^2] = \frac{\sigma^2}{n}$, and we see that
$\bbE[(\bar{w}_t-w^*)^2] \geq \frac{\sigma^2}{n}$ when the fluctuations are not necessarily small.
%we cannot average away the noise at a rate better than the law of large numbers.
So we have that $\nor{\bx}^{-4}\bbE\Xb^2 \geq \frac{\sigma^2}{n}$.
Combining the above with Theorem~\ref{thm:tilde_bounds},
\[
\frac{\sigma^2}{n} +
\left[\frac{(n-1)\sigma^2}{2n\lambda_{+}}\left(1-\frac{\nor{\bx}^2}{\lambda_{-}}\right)\right]^{+} \leq
\frac{\bbE\Xb^2}{\nor{\bx}^4} + \frac{\bbE\nor{\Xt}^2}{n\nor{\bx}^4} \leq
\frac{\sigma^2}{2\lambda_{-}} + \bbE\bigl[(\bar{w}_t-w^{*})^2\bigr]
\]
with the notation $[\,\cdot\,]^{+}\equiv\max(0,\cdot)$. Equation~\eqref{eqn:norm_decomp} then shows that
the middle quantity above is equal to $\bbE\left[\frac{1}{n}\sum_{i=1}^n(w_i(t)-w^{*})^2\right]$.

\subsection*{Acknowledgments}
The authors acknowledge helpful suggestions from and discussions with Jonathan Mattingly.
JB gratefully acknowledges support under NSF contract IIS-08-03293, ONR contract N000140710625 and
Alfred P. Sloan Foundation grant no. BR-4834 to M. Maggioni.

%%%%%%%%%%%%%%%%%%%%%%%%%%%%%%%%%%%%%%%%%%%%%%%%%%%%%%%%%%%%%%%%%%%%%%%%%%%%%%%%%%%%%%%%%
%%%%%%%%%%%%%%%%%%%%%%%%%%%%%%%%%%%%%%%%%%%%%%%%%%%%%%%%%%%%%%%%%%%%%%%%%%%%%%%%%%%%%%%%%
%%%%%%%%%%%%%%%%%%%%%%%%%%%%%%%%%%%%%%%%%%%%%%%%%%%%%%%%%%%%%%%%%%%%%%%%%%%%%%%%%%%%%%%%%

\bibliographystyle{apalike}
%\bibliography{stochlearn}

\end{document}